
\documentclass[11pt, a4paper,reqno]{amsart}

\addtolength{\textheight}{2cm}
\addtolength{\textwidth}{2cm}
\addtolength{\oddsidemargin}{-1cm}
\addtolength{\evensidemargin}{-1cm}
\addtolength{\topmargin}{-1cm}

\usepackage{color} \definecolor{bleu_sombre}{rgb}{0,0,0.6}  \definecolor{rouge_sombre}{rgb}{0.8,0,0}\definecolor{vert_sombre}{rgb}{0,0.6,0}
\usepackage[plainpages=false,colorlinks,linkcolor=bleu_sombre,citecolor=rouge_sombre,urlcolor=vert_sombre,breaklinks]{hyperref}

\usepackage[english]{babel}

\usepackage{amsmath,amssymb,amsthm,graphicx,amsfonts,url,color,enumerate,dsfont,stmaryrd,mathabx}


\theoremstyle{plain}
\newtheorem{theorem}{{Theorem}}[section] 
\newtheorem*{theorem*}{{Theorem}}
\newtheorem{proposition}[theorem]{Proposition}
\newtheorem*{proposition*}{Proposition}

\newtheorem*{corollary*}{Corollary}
\newtheorem{lemma}[theorem]{Lemma}
\newtheorem*{lemma*}{Lemma}

\theoremstyle{definition}

\newtheorem*{definition*}{Definition}

\theoremstyle{remark}

\makeatletter

\@addtoreset{equation}{section}  
\makeatother

\newcommand {\limt}[2]{\xrightarrow[#1 \to #2]{}}

\newcommand{\abs}[1]{\left\vert #1\right\vert}        
\newcommand{\nr}[1]{\left\Vert #1\right\Vert}         
\newcommand{\innp}[2]{\left< #1 , #2 \right>}         
\newcommand{\pppg}[1] {\left< #1 \right>}
\newcommand{\set}[1]{\left\{ #1 \right\}}		

\renewcommand{\leq}{\leqslant}	\renewcommand{\geq}{\geqslant}

\newcommand{\inv}{^{-1}}

\newcommand{\littleo}[2]{\mathop{o}\limits_{#1 \to #2}}
\newcommand{\1}{\mathds 1}
\newcommand{\st}{\,:\,}
\newcommand{\restr}[2]{\left.#1\right|_{#2}}         
\renewcommand{\Re}{\mathsf{Re}}        
\renewcommand{\Im}{\mathsf{Im}}  
\newcommand{\trsp}{^{\intercal}} 
\newcommand{\supp}{\mathsf{supp}}

\newcommand{\Dom}{\mathsf{Dom}}

\newcommand{\Id}{\mathsf{Id}} 

\newcommand{\divg}{\mathop{\rm{div}}\nolimits}
\newcommand{\loc}{{\mathsf{loc}}}

\newcommand{\R}{\mathbb{R}}		\newcommand{\C}{\mathbb{C}}
\newcommand{\N}{\mathbb{N}}	\newcommand{\Z}{\mathbb{Z}}	
    
\newcommand{\Sph}{\mathbb{S}}

\renewcommand{\a}{\alpha}\renewcommand{\b}{\beta}\newcommand{\g}{\gamma}\renewcommand{\d}{\delta}\newcommand{\D}{\Delta}\newcommand{\e}{\varepsilon}\newcommand{\z}{\zeta} \renewcommand{\th}{\theta}\newcommand{\Th}{\Theta}\renewcommand{\k}{\kappa}\renewcommand{\l}{\lambda}\newcommand{\x}{\xi}\newcommand{\s}{\sigma}\renewcommand{\t}{\tau}\newcommand{\f}{\varphi}\newcommand{\vf}{\phi}\newcommand{\h}{\chi}\renewcommand{\o}{\omega}

\newcommand{\Ac}{{\mathcal A}}\newcommand{\Gc}{{\mathcal G}}\newcommand{\Hc}{{\mathcal H}}\newcommand{\Kc}{{\mathcal K}}\newcommand{\Lc}{{\mathcal L}}\newcommand{\Rc}{{\mathcal R}}\newcommand{\Sc}{{\mathcal S}}\newcommand{\Tc}{{\mathcal T}}

\newcommand{\stepp}{\noindent {\bf $\bullet$}\quad }

\newcommand{\detail}[1]
{
}

\usepackage{mathrsfs}

\begin{document}

\newcommand{\HH}{\mathcal H}\newcommand{\LL}{\mathcal L}\newcommand{\EE}{\mathscr E}

\newcommand{\Opw}{{\mathop{\rm{Op}}}_h^w}
\newcommand{\Opwn}{{\mathop{\rm{Op}}}_{h_n}^w}

\title[The Klein-Gordon equation with highly oscillating damping]{Energy decay for the Klein-Gordon equation with highly oscillating damping}

\author{Julien Royer}
\address{Institut de Math\'ematiques de Toulouse, Universit\'e Toulouse 3, 118 route de Narbonne - F31062 Toulouse c\'edex 9, France}
\email{julien.royer@math.univ-toulouse.fr}

\begin{abstract}
We consider the free Klein-Gordon equation with periodic damping. We show on this simple model that if the usual geometric condition holds then the decay of the energy is uniform with respect to the oscillations of the damping, and in particular the size of the derivatives do not play any role. We also show that without geometric condition the polynomial decay of the energy is even slightly better for a highly oscillating damping. To prove these estimates we provide a parameter dependent version of well known results of semigroup theory.
\end{abstract}

\subjclass[2010]{35L05, 35B40, 47B44, 47A10}
\keywords{Damped wave equation, energy decay, resolvent estimates, oscillating damping.}

\maketitle

\section{Introduction and statements of the main results}

Let $d \geq 1$ and $m> 0$. For $(u_0,u_1) \in H^1(\R^d)\times L^2(\R^d)$ we consider on $\R^d$ the damped Klein-Gordon equation
\begin{equation} \label{KG}
\begin{cases}
\partial_t^2 u  -\D u + mu + a_\eta \partial_t u = 0, \quad t \geq 0,\\
(u,\partial_t u)|_{t=0} = (u_0,u_1).
\end{cases}
\end{equation}
For the damping term we consider on $\R^d$ a continuous and $\Z^d$-periodic function $a$ which takes non-negative values and is not identically zero. Then for $\eta \geq 1$ and $x \in \R^d$ we define the absorption index 
\[
a_\eta(x) := a(\eta x).
\]

In this note we are interested in the decay of the energy of the solution $u$. It is defined by 
\[
E(t) := \nr{u(t)}_{H^1}^2 + \nr{\partial_t u(t)}_{L^2}^2,
\]
where $H^1(\R^d)$ is endowed with the norm given by 
\begin{equation} \label{def-norme-H1}
\nr{u}_{H^1}^2 := \nr{\nabla u}_{L^2}^2 + m \nr{u}_{L^2}^2.
\end{equation}
This energy in non-increasing. More precisely, for $t_1 \leq t_2$ we have 
\[
E(t_2) - E(t_1) = -2 \int_{t_1}^{t_2} \int_{\R^d} a_\eta(x) \abs{\partial_t u(t,x)}^2 \, dx \, dt \leq 0.
\]
It is known (see \cite{BurqJo} and references therein) that for $\eta = 1$ the decay is uniform and hence exponential with respect to the initial energy under the so-called Geometric Control Condition. Here, with the free Laplacian, this assumption is that there exist $T > 0$ and $\a > 0$ such that for all $(x,\x) \in \R^d \times \Sph^{d-1}$ we have 
\begin{equation*} 
\pppg {a}_T (x,\x) := \frac 1 T \int_0^T a(x+2t\x) \, dt \geq \a.
\end{equation*}
It is not difficult to check that if this holds for $a$, then it also holds for $a_\eta$ for any $\eta \geq 1$, with constants $T$ and $\a$ which do not depend on $\eta$:
\begin{equation} \label{hyp-GCC}
\exists T > 0, \exists \a > 0, \forall \eta \geq 1, \quad \pppg {a_\eta}_T \geq \a \quad \text{ on } \R^{d} \times \Sph^{d-1}.
\end{equation}

      \detail
      {
      Let $\eta \geq 1$. Let $k \in \N^*$ be such that $\eta T - k T  \in [0, T[$. Then we have 
      \begin{align*}
      \frac 1 T \int_0^T a_\eta (x + 2t\x) \, dt = \frac 1 T \int_0^T a (\eta x + 2 \eta t \x) \, dt = \frac 1 {\eta T} \int_0^{\eta T} a (\eta x + 2s\x) \, ds \geq \frac 1 {(k+1) T} \int_0^{kT} a(\eta x + 2s\x) \, ds \geq \frac \a 2.
      \end{align*}
      }

However, in all the results about uniform energy decay for the damped Klein-Gordon (or wave) equation, some bounds are required for the variations of the absorption index. This rises the natural question wether the exponential decay of the energy $E(t)$ is uniform with respect to $\eta \geq 1$.

\begin{theorem} \label{th-unif-decay}
Assume that the damping condition \eqref{hyp-GCC} holds. Then there exist $\g > 0$ and $C > 0$ such that for $\eta \geq 1$, $(u_0,u_1) \in H^1(\R^d) \times L^2(\R^d)$ and $t \geq 0$ we have 
\[
\nr{u(t)}_{H^1} + \nr{\partial_t u(t)}_{L^2}  \leq C e^{-\g t} \big( \nr{u_0}_{H^1} + \nr{u_1}_{L^2} \big),
\]
where $u$ is the solution of \eqref{KG}.
\end{theorem}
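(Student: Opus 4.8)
The plan is to reduce the uniform exponential decay to a uniform (in $\eta$) resolvent estimate for the stationary operator on the imaginary axis, via a Gearhart--Prüss type argument with constants tracked explicitly in $\eta$. Write the equation \eqref{KG} as a first-order system $\partial_t U = \Ac_\eta U$ on the Hilbert space $\Hc = H^1(\R^d) \times L^2(\R^d)$, where $\Ac_\eta = \begin{pmatrix} 0 & \Id \\ \D - m & -a_\eta \end{pmatrix}$ with the energy norm. Since $\|U(t)\|_\Hc^2 = E(t)$ and $\Re \innp{\Ac_\eta U}{U}_\Hc = -\int a_\eta \abs{u_1}^2 \leq 0$, each $\Ac_\eta$ generates a contraction semigroup, so the target estimate is equivalent to a uniform bound $\sup_{\eta \geq 1} \sup_{\tau \in \R} \nr{(\Ac_\eta - i\tau)\inv}_{\Lc(\Hc)} < \infty$. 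By the parameter-dependent version of the Gearhart--Prüss theorem that the paper announces it will prove, this uniform resolvent bound implies the claimed uniform decay with a rate $\g$ and constant $C$ depending only on the bound.

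Next I would establish the uniform resolvent estimate. For $\tau \in \R$ and $F = (f_1,f_2) \in \Hc$, solving $(\Ac_\eta - i\tau) U = F$ with $U = (u,v)$ gives $v = i\tau u + f_1$ and the scalar stationary equation $(-\D + m - \tau^2) u - i\tau a_\eta u = f_2 + i \tau a_\eta f_1 + (\text{lower order terms from } f_1)$. The standard route is a contradiction/compactness argument: if no uniform bound held, there would be sequences $\eta_n \to$ something, $\tau_n \in \R$, and normalized $U_n$ with $(\Ac_{\eta_n} - i\tau_n) U_n \to 0$ in $\Hc$. The damping term forces $\int a_{\eta_n} \abs{v_n}^2 \to 0$. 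High frequencies $\abs{\tau_n} \to \infty$ are handled by semiclassical defect measure / propagation arguments: after rescaling $h_n = 1/\abs{\tau_n}$, the measure is invariant under the geodesic flow of the free Laplacian (straight lines $x \mapsto x + 2t\x$), while \eqref{hyp-GCC} — which holds uniformly in $\eta$ — forces the measure to vanish, contradicting $\nr{U_n}_\Hc = 1$. Low and bounded frequencies are easier: $\abs{\tau_n}$ bounded away from both $0$ and $\infty$ reduces to an elliptic-type unique continuation plus the damping, and $\tau_n \to 0$ is handled directly since $-\D + m$ is boundedly invertible. The essential point is that the geometric control constants $T,\a$ in \eqref{hyp-GCC} are independent of $\eta$, so the high-frequency observability estimate is uniform in $\eta$.

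The main obstacle is the high-frequency analysis done uniformly in $\eta$: the rapidly oscillating coefficient $a_\eta(x) = a(\eta x)$ has derivatives of size $\eta$, which is exactly the quantity that appears in classical microlocal/pseudodifferential treatments of the damped wave equation and which one must avoid using. The key observation that makes this tractable is that $a_\eta$ is \emph{bounded in $L^\infty$ uniformly in $\eta$} (indeed $0 \leq a_\eta \leq \nr{a}_{L^\infty}$), and it enters the resolvent identity only through the multiplication operator and through the time-averaged quantity $\pppg{a_\eta}_T$, never through its derivatives; the propagation estimate for the free flow needs no regularity of the potential, only an $L^\infty$ bound and the averaged lower bound \eqref{hyp-GCC}. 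Concretely one would run the semiclassical measure argument with $h_n = \abs{\tau_n}\inv$ independently of $\eta_n$, note that $a_{\eta_n} \abs{v_n}^2 \to 0$ in $L^1$ together with the transport equation for the measure yields $\int_0^T \mu(a_{\eta_n}(x + 2t\x) \,\cdot\,) \,dt = 0$ in the limit; here one must be slightly careful because $a_{\eta_n}$ itself oscillates, so rather than passing $a_{\eta_n}$ to a limit one keeps it inside and uses \eqref{hyp-GCC} directly on each term before taking $n \to \infty$. Once the uniform resolvent bound is in hand, the rest is a direct application of the quantitative semigroup result.

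Alternatively, and perhaps more cleanly, one can avoid defect measures entirely by using a parameter-dependent observability estimate proved by a direct multiplier or energy-flux computation for the free Klein-Gordon group (whose propagator is explicit via the Fourier transform), combined with \eqref{hyp-GCC}; this has the advantage of making the $\eta$-uniformity manifest since the free group $e^{it\sqrt{-\D+m}}$ does not see $\eta$ at all, and $a_\eta$ appears only as a bounded multiplier that one integrates against along rays. I expect the paper to take the resolvent-estimate route because the abstract semigroup input is explicitly advertised in the abstract, so I would structure the proof as: (i) reformulate as a semigroup problem; (ii) state the parameter-dependent Gearhart--Prüss criterion; (iii) prove the uniform-in-$\eta$ resolvent estimate on $i\R$, splitting into $|\tau|$ small, moderate, and large, with the large regime being the heart of the matter and handled by a contradiction argument using \eqref{hyp-GCC} with its $\eta$-independent constants.
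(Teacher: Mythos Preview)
Your overall architecture matches the paper's: rewrite as a contraction semigroup on $\HH$, prove a parameter-tracked Gearhart--Pr\"uss theorem, and reduce everything to a uniform-in-$\eta$ bound on $\nr{(\Ac_\eta + i\t)\inv}_{\Lc(\HH)}$, which in turn reduces to estimating $(-\D+m-i\t a_\eta-\t^2)\inv$ on $L^2$. You also correctly locate the difficulty in the high-frequency regime. The gap is in how you propose to handle it. Running the semiclassical contradiction argument directly with the coefficients $a_{\eta_n}$ does not work as written: while the transport equation for a defect measure $\m$ indeed only needs $a_{\eta_n}\in L^\infty$, the step where one \emph{uses} \eqref{hyp-GCC} to kill $\m$ requires either building an escape function out of $a_{\eta_n}$ (whose derivatives grow like $\eta_n$, so it is not a uniform symbol and the semiclassical calculus with parameter $h_n=\abs{\t_n}\inv$ collapses) or testing the \emph{limiting} measure $\m$ against $a_{\eta_n}$, which is meaningless since $a_{\eta_n}$ varies with $n$. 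Your proposed fix ``keep $a_{\eta_n}$ inside and use \eqref{hyp-GCC} on each term before $n\to\infty$'' is precisely the place where uniform symbol bounds are needed and are unavailable; it is not a slight care issue but the whole point of the paper.

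The device you are missing is a rescaling. The paper conjugates by the dilation $(\Th_\eta u)(x)=\eta^{d/2}u(\eta x)$, which turns $-\D+m-i\t a_\eta-\t^2$ into $-\eta^2\D+m-i\t a-\t^2$ with the \emph{fixed} absorption index $a$. Setting $h=\eta/\abs\t$ and $\e=1/\eta$, the high-frequency estimate becomes $\nr{(-h^2\D-i\e h\, a-1)\inv}\leq C/(\e h)$ uniformly for $h,\e\in(0,1]$. Now the escape function and every symbol in the commutator argument are built once and for all from a smooth $a_\infty$ with $0\leq a_\infty\leq a$ and $\pppg{a_\infty}_T\geq\a/2$, independently of $n$, and the standard Lebeau-type contradiction goes through; the only extra bookkeeping is the small parameter $\e$ in front of the damping, which produces the factor $1/\e$ in the bound. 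This rescaling is what converts ``oscillating coefficient, fixed Laplacian'' into ``fixed coefficient, semiclassical Laplacian with weak damping'', and it is the missing idea that makes the uniformity in $\eta$ go through.
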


This estimate essentially depends on the contribution of high frequencies. To prove such a result, it is standard to use semiclassical analysis. It is efficient but, on the other hand, it requires a lot of regularity. It is usual to replace the absorption coefficient $a$ by a smooth symbol $\tilde a$ such that $0 \leq \tilde a \leq a$ and $\tilde a$ still satisfies \eqref{hyp-GCC}, possibly with a different $\a$. This idea was already used in \cite{royer-mourre} but the first two radial derivatives of $a$ had to be bounded. This was also used in \cite{BurqJo} but, again, a uniformity on the derivatives of $\tilde a$ was required, so $a$ was assumed to be uniformly continuous. Since the family $(a_\eta)_{\eta \geq 1}$ is not uniformly equicontinous, we cannot prove Theorem \ref{th-unif-decay} with the results of \cite{BurqJo} (see the counter-example of Figure 4.a therein). 

The purpose of this note is to emphasize on a model case that the oscillations of the damping should not play a crucial role in the energy decay of the wave.

For the proof, we will use the same kind of ideas as in \cite{BurqJo} and track (on our periodic setting) the role played by the frequency $\eta$ of the damping.\\

In Theorem \ref{th-unif-decay} we have discussed the energy decay under the damping condition \eqref{hyp-GCC}. It is known that we cannot have uniform decay of the energy without this assumption. However, for a fixed periodic damping, it is proved in \cite{Wunsch} that without any geometric condition we have at least a polynomial decay (with loss of regularity). Here, we prove that this decay is uniform with respect to $\eta$, and moreover the loss of regularity is weaker for the highly oscillating damping. 

This phenomenon is natural. Indeed, for large $\eta$ the damping region becomes in some sense more uniformly distributed in $\R^d$, so even if the average strength of the damping does not depend on $\eta$, and even if \eqref{hyp-GCC} still does not hold for large $\eta$, the distance between undamped classical rays and the damping region gets smaller, so the phenomemon that a high frequency wave approximately following such a ray 
does not see the damping only appears for larger and larger frequencies.

\begin{theorem} \label{th-no-GCC}
There exists $c > 0$ such that for all $\eta \geq 1$, $(u_0,u_1) \in H^2(\R^d) \times H^1(\R^d)$ and $t \geq 0$ we have 
\[
\nr{u(t)}_{H^1} + \nr{\partial_t u(t)}_{L^2} \leq \frac c {\sqrt {1+t}} \left(\nr{u_0}_{H^1} + \nr{u_1}_{L^2} + \frac {\nr{\D u_0}_{L^2} + \nr{\nabla u_1}_{L^2}}{\eta^2} \right),
\]
where $u$ is the solution of \eqref{KG}.
\end{theorem}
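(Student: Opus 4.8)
The plan is to reduce the energy decay to a resolvent estimate for the stationary operator, using the parameter-dependent version of the Borichev--Tomilov / Batty--Duyckaerts theorem announced in the abstract. Write the evolution \eqref{KG} as a first order system $\partial_t U = \Ac_\eta U$ on $\HH = H^1(\R^d) \times L^2(\R^d)$, where $\Ac_\eta = \begin{pmatrix} 0 & \Id \\ \D - m & -a_\eta \end{pmatrix}$ generates a contraction semigroup. The key is to estimate $\nr{(\Ac_\eta - i\mu)^{-1}}_{\Lc(\HH)}$ for $\mu \in \R$, uniformly in $\eta \geq 1$. As usual the high-frequency regime $\abs \mu \to \infty$ is the delicate part; the low-frequency part near $\mu = 0$ is handled by the presence of the mass $m > 0$ (the operator $-\D + m$ is bounded below by $m$, so $0$ is not in the spectrum of the "undamped" generator and a neighborhood of $\mu = 0$ gives a uniformly bounded resolvent). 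So the real content is: there is $C > 0$ such that for $\abs\mu$ large and $\eta \geq 1$,
\[
\nr{(\Ac_\eta - i\mu)^{-1}}_{\Lc(\HH)} \leq C\Big( \abs\mu^2 + \frac{\abs\mu^4}{\eta^4} \Big)^{1/2} \quad \text{(schematically)},
\]
i.e. the resolvent grows like $\abs\mu^2$ but with a gain as soon as $\abs\mu \lesssim \eta^2$. Feeding a bound of the form $\nr{(\Ac_\eta - i\mu)^{-1}} \lesssim \langle \mu \rangle^2$ into the parameter-dependent Borichev--Tomilov statement gives $t^{-1/2}$ decay on the domain $\Dom(\Ac_\eta^2) = H^2 \times H^1$ with its natural norm; tracking the $\eta$-dependence in that abstract lemma is what produces the improved weight $\eta^{-2}$ in front of $\nr{\D u_0}_{L^2} + \nr{\nabla u_1}_{L^2}$.

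The resolvent estimate itself I would prove by contradiction and semiclassical reduction, as in \cite{BurqJo} and \cite{Wunsch}, but keeping $\eta$ as a second parameter. Setting $h = \abs\mu^{-1}$, the equation $(\Ac_\eta - i\mu) U = F$ becomes, after eliminating the first component, a semiclassical equation $(-h^2 \D + h^2 m - 1 \pm i h a_\eta) v = $ (source), and one must show $\nr v$ is controlled by the source times an appropriate power of $h$. Suppose not: then there is a sequence $h_n \to 0$, $\eta_n \geq 1$, and normalized quasimodes $v_n$ with vanishing source. Pass to a semiclassical defect measure $\nu$ on $T^* \R^d$; it is a probability measure carried by the characteristic set $\{\abs\xi^2 = 1\}$ and invariant under the geodesic flow $(x,\xi) \mapsto (x + 2t\xi, \xi)$. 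The damping term forces $\int a_{\eta_n} \abs{v_n}^2 \to 0$. Here is the point where periodicity and the two-parameter bookkeeping enter: because $a$ is $\Z^d$-periodic and continuous with $\int_{[0,1]^d} a > 0$, the rescaled functions $a_{\eta_n}(x) = a(\eta_n x)$ converge weak-$*$ to the constant $\bar a := \int_{[0,1]^d} a\, dx > 0$ \emph{whenever $\eta_n \to \infty$}, so the measure $\nu$ would have to satisfy $\bar a \int d\nu = 0$, impossible. If instead $\eta_n$ stays bounded, we are in the fixed-damping situation of \cite{Wunsch} with the standard argument (no invariant flow-measure can avoid the open set $\{a > 0\}$ on $\R^d$ because every geodesic is unbounded and the damping is periodic, hence meets $\{a>\text{const}\}$ in positive proportion by \eqref{hyp-GCC}-type averaging — here we only need the weaker statement underlying \cite{Wunsch}). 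This contradiction gives the qualitative estimate; the quantitative power of $h$, and in particular the crossover at $h \sim \eta^{-2}$, must be extracted by making the measure/commutator argument effective, e.g. via a positive-commutator (Mourre-type) estimate with an escape function, which is exactly the kind of computation done in \cite{royer-mourre}, now carried out with the observation that replacing $a$ by $a_\eta$ effectively multiplies the relevant symbol gradients by $\eta$ and hence improves the lower bound where $h\eta \gtrsim 1$.

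Concretely the steps are: (1) set up the semigroup framework and reduce to resolvent bounds via the parameter-dependent Borichev--Tomilov lemma, identifying $H^2 \times H^1$ as the relevant domain and the role of $\eta^{-2}$; (2) dispatch low frequencies using $m > 0$; (3) for high frequencies, do a semiclassical rescaling and reduce to a quasimode estimate with parameters $(h, \eta)$; (4) prove the quasimode estimate by a defect-measure contradiction argument, splitting into the cases $\eta \to \infty$ (where $a_\eta \rightharpoonup \bar a > 0$ kills the measure directly) and $\eta$ bounded (where one invokes the averaging/unique-continuation-along-rays mechanism of \cite{Wunsch}); (5) upgrade to a quantitative rate and locate the $h \sim \eta^{-2}$ threshold via an effective commutator estimate. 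The main obstacle, as flagged in the introduction, is that $(a_\eta)_\eta$ is not uniformly equicontinuous, so none of the usual symbolic pseudodifferential manipulations (which cost derivatives of the symbol) are uniform in $\eta$; the resolution is to never differentiate $a_\eta$ more than is covered by the periodic-homogenization gain — i.e. to treat $a_\eta$ as a multiplication operator that is small in a weak sense but whose oscillations, far from hurting, supply the averaging that makes the undamped rays disappear. Balancing these two effects (loss from oscillation in the symbol calculus vs. gain from homogenization of the damping) is the crux and is precisely where the explicit power $\eta^{-2}$ comes from.
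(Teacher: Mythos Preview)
Your semigroup reduction and the parameter-dependent Borichev--Tomilov step are correct and match the paper. The gap is in how you obtain the resolvent estimate itself.

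You propose to work directly with the oscillating damping $a_\eta$ and run a semiclassical contradiction argument with defect measures, splitting on whether $\eta_n \to \infty$. This has two real problems. First, the argument is inherently non-quantitative: a defect-measure contradiction can at best produce \emph{some} bound $C_\eta \pppg{\mu}^2$, not the explicit crossover $(1+\abs\mu/\eta^2)^2$ that you need to feed into the abstract lemma with $\nu = \eta^{-2}$; your step (5), ``upgrade via an effective commutator estimate'', is where the actual work would live and you have not indicated how it goes. Second, the homogenization step in your case $\eta_n \to \infty$ is not justified as written: from $\int a_{\eta_n} \abs{v_n}^2 \to 0$ and $a_{\eta_n} \rightharpoonup \bar a$ you cannot conclude $\bar a \int d\nu = 0$, because $\abs{v_n}^2$ is itself only weakly convergent and you are taking a product of two weak limits. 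Making this rigorous would require a two-scale or $H$-measure argument that you have not set up, and the oscillations of $a_\eta$ (whose derivatives are not uniformly bounded) obstruct the standard pseudodifferential calculus you would need.

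The paper avoids all of this by a rescaling you have overlooked. Conjugating by the dilation $(\Th_\eta u)(x) = \eta^{d/2} u(\eta x)$ turns $-\D + m - i\t a_\eta - \t^2$ into $-\eta^2 \D + m - i\t a - \t^2$: the damping is now the \emph{fixed} function $a$, and the oscillation has been absorbed into the coefficient of the Laplacian. Writing the equation as
\[
-\D u - \frac{\t^2 - m}{\eta^2}\, u \;=\; \frac{f + i\t a u}{\eta^2},
\]
one applies Wunsch's observability estimate $\nr{u} \lesssim \nr{(-\D - \l)u} + \nr{u}_{L^2(\o)}$ directly, with $\l = (\t^2-m)/\eta^2$. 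The factor $\eta^{-2}$ on the right-hand side, together with the elementary bound $\nr{\sqrt a\, u}^2 \leq \abs\t^{-1}\nr f \nr u$, yields $\nr{u} \lesssim \abs\t^{-1}(1+\abs\t/\eta^2)^2 \nr f$ in a few lines. No defect measures, no homogenization, and the power $\eta^{-2}$ drops out of the algebra of the rescaling rather than from any balance of symbol-calculus losses against averaging gains.
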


For simplicity we have assumed that $a$ is at least continuous, but this is not necessary. For Theorem \ref{th-no-GCC} it is enough to assume that $a$ is bounded and that for some open and $\Z^d$-periodic subset $\o$ of $\R^d$ and $\a_0 > 0$ we have 
\begin{equation} \label{hyp-minor-a}
a \geq \a_0 \1_{\o}.
\end{equation}

For Theorem \ref{th-unif-decay} the assumption is that $a$ is bounded and there exists $\tilde a \in C^\infty(\R^d)$ such that $0 \leq \tilde a \leq a$ and \eqref{hyp-GCC} holds with $a$ replaced by $\tilde a$. As explained in \cite{BurqJo}, this is in particular the case if $a$ is uniformly continuous (for instance, if $a$ is continuous and periodic). We recall that the main point here is that even with a smooth absorption index $a$ the rescaled version $a_\eta$ has derivatives which are not uniformly bounded in $\eta$.\\

This note is organized as follows. In Section \ref{sec-resolvent-to-time} we show how Theorems \ref{th-unif-decay} and \ref{th-no-GCC} are deduced from corresponding resolvent estimates in the energy space. In Section \ref{sec-res-from-L2-to-Hc} we show that these resolvent estimates are in turn consequences of resolvent estimates in the physical space. And finally, in Section \ref{sec-resolvent-L2} we prove these resolvent estimates for a family of Schr\"odinger type operators on $L^2(\R^d)$.

\section{From resolvent estimates to the energy decay} \label{sec-resolvent-to-time}

As usual for the Klein-Gordon equation, we rewrite \eqref{KG} as a first order Cauchy problem in the energy space. We set 
\[
\HH := H^1(\R^d) \times L^2(\R^d),
\]
endowed with the product norm (recall that the norm on $H^1(\R^d)$ is as given by \eqref{def-norme-H1}).
Then, on $\HH$, we consider for $\eta \geq 1$ the operator 
\[
\Ac_\eta := 
\begin{pmatrix}
0 & 1 \\
\D - m  & -a_\eta
\end{pmatrix}
\]
with domain (independent of $\eta$)
\[
\Dom(\Ac) := H^2(\R^d) \times H^1(\R^d).
\]

Let $U_0 = (u_0,u_1) \in \Dom(\Ac)$. Then $u$ is a solution of \eqref{KG} if and only if $U : t \mapsto (u(t),\partial_t u(t))$ is a solution of 
\begin{equation} \label{KG-Ac}
\begin{cases}
\partial_t U - \Ac_\eta U = 0,\\
U(0) = U_0.
\end{cases}
\end{equation}
We will check in Proposition \ref{prop-Ac-diss} that $\Ac_\eta$ generates a contractions semigroup on $\HH$. Then, in this setting, Theorem \ref{th-unif-decay} reads
\begin{equation} \label{decay-HH}
\exists C \geq 0, \exists \gamma > 0, \forall \eta \geq 1, \forall U_0 \in \HH, \forall t \geq 0, \quad \nr{e^{-it\Ac_\eta} U_0}_{\HH} \leq C e^{-\g t} \nr{U_0}_{\HH}.
\end{equation}
And Theorem \ref{th-no-GCC} can be rewritten as
\begin{equation} \label{decay-HH-noGCC}
\exists c \geq 0, \forall \eta \geq 1, \forall U_0 \in \Dom(\Ac), \forall t \geq 0, \quad \nr{e^{-it\Ac_\eta} U_0}_{\HH} \leq \frac c {\sqrt{1+t}} \left( \nr{U_0}_\HH + \frac {\nr{\Ac_\eta U_0}_{\HH}}{\eta^2} \right),
\end{equation}

We are going to prove these estimates from a spectral point of view. More precisely, we will use the following standard results of semigroup theory to deduce \eqref{decay-HH} and \eqref{decay-HH-noGCC} from estimates for the resolvent of $\Ac_\eta$.

\begin{theorem} \label{th-GPH-BT}
Let $\Kc$ be a Hilbert space and let $G$ be an operator on $\Kc$ generating a bounded $C^0$-semigroup $(e^{tG})_{t \geq 0}$. We set 
\[
M = \sup_{t \geq 0} \nr{e^{tG}}_{\Lc(\Kc)},
\]
where $\Lc(\Kc)$ is the space of bounded operators on $\Kc$. Assume that the resolvent set of $G$ contains the imaginary axis.
\begin{enumerate}[(i)]
\item If there exists $C_1 > 0$ such that for all $\t \in \R$ we have 
\[
\nr{(G+i\t)\inv}_{\Lc(\Kc)} \leq C_1,
\]
then there exist $C > 0$ and $\gamma > 0$ which only depend on $C_1$ and $M$ such that for all $t \geq 0$ we have 
\[
\nr{e^{tG}}_{\Lc(\Kc)} \leq C e^{-\gamma t}.
\]
\item If there exist $\kappa \in \N^*$, $c_1 > 0$ and $\nu \in ]0,1]$ such that for all $\t \in \R$ we have 
\[
\nr{(G+i\t)\inv}_{\Lc(\Kc)} \leq c_1 (1+ \nu \abs \t)^{\kappa},
\]
then there exists $c > 0$ which only depends on $c_1$ and $M$ such that for all $t \geq 0$ we have 
\[
\nr{e^{tG}(1+ \nu G)\inv}_{\Lc(\Kc)} \leq c \pppg t^{-\frac 1 \kappa},
\]
where $\pppg \cdot$ stands for $(1+\abs \cdot^2)^{\frac 12}$. 
\end{enumerate}
\end{theorem}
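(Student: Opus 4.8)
The plan is to recognise the two statements as quantitative refinements of classical stability theorems — part (i) is the Gearhart--Pr\"uss--Huang theorem, part (ii) is the Borichev--Tomilov theorem — the whole content being the control of the constants, and in (ii) the uniformity in the parameter $\nu$. In both cases I would begin with the same elementary Neumann series step. If $\nr{(G+i\tau)\inv}\leq M_{\mathrm{res}}(\abs\tau)$ for all $\tau\in\R$, then the factorisation $G+\delta+i\tau=(G+i\tau)\bigl(\Id+\delta(G+i\tau)\inv\bigr)$ gives $\nr{(G+\delta+i\tau)\inv}\leq 2M_{\mathrm{res}}(\abs\tau)$ as soon as $\abs\delta\leq\tfrac12 M_{\mathrm{res}}(\abs\tau)\inv$. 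Since the semigroup is bounded we have $\Sp(G)\subset\{\Re z\leq0\}$, so this propagates the bound to the region $\{z:\Re z\geq-\tfrac12 M_{\mathrm{res}}(\abs{\Im z})\inv\}$, on which (also using $\nr{(z-G)\inv}\leq M/\Re z$ for $\Re z>0$ and $M\geq1$) the resolvent is $\leq 2M\,M_{\mathrm{res}}(\abs{\Im z})$; in particular $\Sp(G)\subset\{\Re z\leq-\tfrac12 M_{\mathrm{res}}(\abs{\Im z})\inv\}$.

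For (i) one has $M_{\mathrm{res}}\equiv C_1$, so $\nr{(z-G)\inv}\leq 2C_1M$ on the half-plane $\{\Re z\geq-\tfrac1{2C_1}\}$ and $\Sp(G)\subset\{\Re z\leq-\tfrac1{2C_1}\}$. I would then run the standard Hilbert-space (Plancherel) proof of the Gearhart--Pr\"uss theorem for the rescaled generator $G+\tfrac1{4C_1}$: its resolvent is bounded by $2C_1M$ on $\{\Re z\geq0\}$ and $\sup_{t\in[0,1]}\nr{e^{t(G+\frac1{4C_1})}}\leq e^{1/(4C_1)}M$, which forces $\sup_{t\geq0}\nr{e^{t(G+\frac1{4C_1})}}\leq C$ with $C$ depending only on $C_1$ and $M$. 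Multiplying by $e^{-t/(4C_1)}$ gives the exponential estimate with $\gamma=\tfrac1{4C_1}$.

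For (ii) the efficient route is rescaling. Put $B:=\nu G$. Then $e^{tB}=e^{\nu tG}$ is still bounded by $M$, while $\nr{(B+i\sigma)\inv}=\nu\inv\nr{(G+i\sigma/\nu)\inv}\leq c_1\nu\inv(1+\abs\sigma)^\kappa$, so $B$ satisfies the hypotheses of the classical Borichev--Tomilov theorem with resolvent-growth constant $c_1/\nu$ and the parameter gone. That theorem, in quantitative form, yields $\nr{e^{tB}(\Id+B)\inv}\leq \Phi(M)\,(c_1/\nu)^{1/\kappa}\,t^{-1/\kappa}$ for $t\geq1$, the decisive feature being that the resolvent constant enters \emph{exactly} with the exponent $1/\kappa$ and with no logarithmic loss (this is what Borichev--Tomilov gain over Batty--Duyckaerts for polynomial weights on a Hilbert space). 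Undoing the scaling, $e^{tB}(\Id+B)\inv=e^{\nu tG}(\Id+\nu G)\inv$, and the substitution $t\mapsto r=\nu t$ makes the two powers of $\nu$ cancel, giving $\nr{e^{rG}(1+\nu G)\inv}\leq c\,r^{-1/\kappa}$ for $r\geq\nu$, hence for all $r\geq1$, with $c$ depending only on $c_1$ and $M$. For $0\leq r\leq1$ the bound is immediate from $\nr{e^{rG}(1+\nu G)\inv}\leq M\nr{(1+\nu G)\inv}$ once $\nr{(1+\nu G)\inv}$ is bounded uniformly in $\nu\in ]0,1]$.

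The real work is therefore the classical Borichev--Tomilov theorem in precisely this sharp quantitative form, with constant $\lesssim_\kappa\Phi(M)\,C^{1/\kappa}$ and no extra factor of $\log C$: this is exactly what produces the $\nu$-uniformity and is the heart of the matter. Its proof is the usual one — a contour integral representation of $e^{tB}(\Id+B)\inv$, deformation of the contour into the resolvent-free region of width $\sim C\inv(1+\abs\sigma)^{-\kappa}$ supplied by the Neumann step above (so that $e^{t\,\Re z}$ provides decay), optimisation of the truncation height around $\abs{\Im z}\sim(t/C)^{1/\kappa}$, the regularising factor $(\Id+B)\inv$ furnishing the $\abs\sigma$-decay needed for convergence, and a density argument to remove the extra smoothness of the data since the final bound only involves $\nr{u_0}$. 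A secondary point to settle separately is the invertibility of $1+\nu G$ with a $\nu$-uniform bound on its inverse; this holds in the application because $\Sp(\Ac_\eta)\subset\{\Re\leq0\}$ stays at uniform distance from $\{-1/\nu:\nu\in ]0,1]\}$, and in general one avoids it by carrying out the contour argument directly with the substitution $\sigma\mapsto\nu\sigma$ performed throughout — the same computation, never touching the negative real axis off the contour.
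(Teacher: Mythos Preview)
Your treatment of (i) is fine and close in spirit to the paper's. The paper does not shift the generator; instead it proves a single lemma --- if $B$ commutes with $G$ and $\beta:=\sup_\tau\nr{(G+i\tau)\inv B}<\infty$, then $\nr{e^{tG}B}\leq C_{M,\beta}/\pppg t$ --- via a block operator $\Gc=\begin{pmatrix}G&B\\0&G\end{pmatrix}$ on $\Kc\times\Kc$, the representation $e^{t\Gc}=\frac1{2i\pi t}\int e^{zt}(\Gc-z)^{-2}\,dz$, and Parseval for $\tau\mapsto(G-(\e-i\tau))\inv\f$. Taking $B=\Id$ gives $\nr{e^{tG}}\lesssim 1/t$, and the semigroup property then yields exponential decay. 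This is the same Plancherel mechanism you invoke, packaged differently.

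For (ii) there is a genuine gap. Your rescaling $B=\nu G$ is elegant, and your target statement --- Borichev--Tomilov with constant exactly $\Phi(M)\,C_0^{1/\kappa}$ in the resolvent constant $C_0=c_1/\nu$ --- is in fact equivalent to the $\nu$-uniform result you want (the scaling goes both ways). But the proof you sketch for it is the \emph{contour-deformation} argument, which is Batty--Duyckaerts, not Borichev--Tomilov: it works in any Banach space and inevitably produces a logarithmic loss. Concretely, shifting into a strip of width $\sim C_0\inv(1+\abs\sigma)^{-\kappa}$ and optimising gives $\nr{e^{sB}(1+B)\inv}\lesssim (C_0\log s/s)^{1/\kappa}$; after your substitution $s=t/\nu$ the $\log s$ becomes $\log(t/\nu)$, which is \emph{not} uniform as $\nu\to0$. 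You yourself note that the Hilbert-space structure is what removes the log, yet your sketch never uses it.

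The paper avoids the whole issue by not rescaling at all. It takes $B=(1+\nu G)^{-\kappa}$ in the block-operator lemma and shows, via the resolvent identity
\[
(G+i\tau)\inv(G+\nu\inv)^{-\kappa}=\frac{(G+i\tau)\inv}{(\nu\inv-i\tau)^\kappa}-\sum_{k=1}^\kappa\frac{(G+\nu\inv)^{-(\kappa+1-k)}}{(\nu\inv-i\tau)^k},
\]
that $\sup_\tau\nr{(G+i\tau)\inv(1+\nu G)^{-\kappa}}$ is bounded by a constant depending only on $c_1$ and $M$, \emph{uniformly in $\nu$}. The Parseval argument then gives $\nr{e^{tG}(1+\nu G)^{-\kappa}}\lesssim_{c_1,M}1/\pppg t$ directly, and one finishes by the interpolation of \cite{BatkaiEngPruSch06}. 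The point is that working with the $\nu$-dependent regulariser from the start keeps $\beta$ bounded; if instead you scale first and regularise afterwards (your route), the corresponding $\beta$ picks up a factor $1/\nu$, and the quadratic dependence $C_{M,\beta}\sim\beta^2$ coming out of Cauchy--Schwarz then ruins the uniformity. So your reduction is correct in principle, but to close it you would have to reprove Borichev--Tomilov with a Plancherel-type argument tracking the constant --- which is precisely the computation the paper carries out, only in the unscaled variables.
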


This first statement is a famous result by L. Gearhart \cite{Gearhart78} and J. Pr\"uss \cite{Pruss84} (see also F. Huang \cite{Huang85}). The second statement is due to A. Borichev and Y. Tomilov \cite{BorichevTo10}. Here we recall a proof to check the dependence with respect to the different parameters.

\begin{proof}
\stepp Let $B$ be a bounded operator which commutes with $G$ and such that 
\[
\b := \sup_{\t \in \R} \nr{(G+i\t)^{-1} B}_{\Lc(\Kc)} < +\infty.
\]
The spectrum of $G$ is a subset of the left half-plane and for $\e > 0$ and $\t \in \R$ we have 
\begin{equation} \label{estim-res-G}
\nr{(G-(\e-i\t))\inv}_{\Lc(\Kc)} \leq \frac M \e
\end{equation}
(see for instance Corollary II.1.11 in \cite{engel}). Hence, by the resolvent identity, we have for $\e \in ]0,1]$
\begin{eqnarray} \label{estim-resG-B}
\lefteqn{\nr{(G-(\e-i\t))\inv B}_{\Lc(\Kc)}}\\
\nonumber
&& \leq \nr{(G+i\t)\inv B}_{\Lc(\Kc)} + \e \nr{(G-(\e-i\t))\inv}_{\Lc(\Kc)}\nr{(G+i\t)\inv B}_{\Lc(\Kc)}\\
\nonumber
&& \leq (1+M)\b.
\end{eqnarray}

\stepp 
We define on $\Kc \times \Kc$ the operator 
\[
\Gc = 
\begin{pmatrix}
G & B \\
0 & G
\end{pmatrix},
\]
with domain $\Dom(\Gc) := \Dom(G) \times \Dom(G)$. We can check that $\Gc$ has the same spectrum as $G$, and for $z$ in their common resolvent set we have 
\[
(\Gc - z)\inv =
\begin{pmatrix}
(G-z)\inv & -(G-z)^{-2} B \\
0 & (G-z)\inv
\end{pmatrix}.
\]
Moreover $\Gc$ generates the $C^0$-semigroup given by  
\[
e^{t\Gc} =
\begin{pmatrix}
e^{tG} & t e^{tG} B \\
0 & e^{tG}
\end{pmatrix}, \quad t \geq 0.
\]

	\detail 
	{
	Let $z \in \C$. Assume that $(\Gc - z)$ has a bounded inverse on $\Kc \times \Kc$. In particular $(\Gc - z)$ is injective, so $(G-z)$ is injective. For $f \in \Kc$ we define $Rf$ as the second component of $(\Gc-z)\inv (0,f)$. This defines a bounded operator from $\Kc$ to $\Dom(G)$. Let $u \in \Dom(G)$ and $(a,b) = (\Gc-z)\inv (0, (G-z)u)$. Then we have 
	\[
	\begin{pmatrix} 0 \\ (G-z)u \end{pmatrix} = (\Gc-z) \begin{pmatrix} a \\ b \end{pmatrix} = \begin{pmatrix} (G-z) a + Bb \\ (G-z) b \end{pmatrix}.
	\]
	Since $(G-z)$ is injective this gives 
	\[
	u = b = R (G-z) u.
	\]
	Now let $f \in \Kc$ and $(a,b) = (\Gc-z)\inv (0,f)$. We have 
	\[
	\begin{pmatrix} 0 \\ f \end{pmatrix} = (\Gc-z) \begin{pmatrix} a \\ b \end{pmatrix} = \begin{pmatrix} (G-z) a + Bb \\ (G-z) b \end{pmatrix},
	\]
	so 
	\[
	(G-z) R f = (G-z) b = f.
	\]
	This proves that $R$ is a bounded inverse for $(G-z)$. Conversely, assume that $(G-z)$ has a bounded inverse on $\Kc$. Then we set 
	\[
	\Rc(z) = 
	\begin{pmatrix}
	(G-z)\inv & -(G-z)^{-2} B \\
	0 & (G-z)\inv
	\end{pmatrix}.
	\]
	This defines a bounded operator from $\Kc \times \Kc$ to $\Dom(\Gc)$ and we can check by a direct computation that it is a bounded inverse for $(\Gc-z)$.
	}

	\detail 
	{
	For $t \geq 0$ we set 
	\[
	\Tc(t) = \begin{pmatrix}
	e^{tG} & t e^{tG} B \\
	0 & e^{tG}
	\end{pmatrix}.
	\]
	Then for all $\Phi \in \Kc \times \Kc$ the map $t \mapsto \Tc(t) \Phi$ is continuous and for $t,s \geq 0$, since $G$ commutes with $e^{sG}$ we have 
	\[
	\Tc(t) \Tc(s) = 
	\begin{pmatrix}
	e^{tG} e^{sG} & se^{tG} e^{sG}B + te^{tG} B e^{sG} \\
	0 & e^{tG} e^{sG}
	\end{pmatrix}
	= \Tc(t+s).
	\]
	Thus $(\Tc(t))_{t \geq 0}$ is a $C^0$-semigroup on $\Kc \times \Kc$. Let $\Phi = (\vf_1,\vf_2) \in \Kc \times \Kc$ be such that the map 
	\[
	t \mapsto \Tc(t) \Phi = \begin{pmatrix} e^{tG} \vf_1 + te^{tG} B \vf_2 \\ e^{tG} \vf_2 \end{pmatrix}
	\]
	is differentiable on $\R_+$. Then in particular $t \mapsto e^{tG} \vf_2$ is differentiable so $\vf_2 \in \Dom(G)$. Since $B$ commutes with $G$ we also have $B\vf_2 \in \Dom(G)$. Then $t \mapsto t e^{tG} B \vf_2$ and hence $t \mapsto e^{tG} \vf_1$ are differentiable, so $\vf_1 \in \Dom(G)$. Finally, $\Phi \in \Dom(\Gc)$. Conversely, if $\Phi \in \Dom(\Gc)$ the map $t \mapsto \Tc(t) \Phi$ is differentible with derivative at 0 given by 
	\[
	\begin{pmatrix} G \vf_1 + B \vf_2 \\ G \vf_2 \end{pmatrix} = \Gc \Phi.
	\]
	}

\stepp 
Let $\e > 0$ and $\f \in \Kc$. Since $\t \mapsto (G - (\e-i\t))\inv \f$ is the inverse Fourier transform of $t \mapsto - \1_{\R_+}(t) e^{-t\e} e^{tG} \f$, we obtain by the Parseval identity
\[
\int_\R \nr{(G - (\e-i\t))\inv \f}_\Kc^2  \, d\t \leq \frac {\pi M^2 }{\e}\nr{\f}_\Kc^2. 
\]
Since $B$ commutes with $G$ we also have by \eqref{estim-resG-B}
\begin{multline*}
\int_\R \nr{(G - (\e-i\t))^{-2} B \f}_\Kc^2  \, d\t \\
\leq \int_\R \nr{(G - (\e-i\t))\inv B}_{\Lc(\Kc)}^2 \nr{(G-(\e-i\t))\inv \f}_\Kc^2  \, d\t \leq \frac {\pi (1+M)^2M^2 \b^2}{\e}\nr{\f}_\Kc^2.
\end{multline*}
Then, for $\e > 0$ and $\Phi = (\f_1,\f_2) \in \Kc \times \Kc$,
\begin{eqnarray} \label{estim-res-Gc}
\lefteqn{\int_{\R} \nr{(\Gc-(\e-i\t))\inv \Phi}_{\Kc \times \Kc}^2 \, d\t}\\
\nonumber
&& \leq \int_{\R}  \left(\nr{(G - (\e-i\t))\inv \f_1}_\Kc^2 + \nr{(G-(\e-i\t))^{-2} B \f_2}_\Kc^2 +  \nr{(G - (\e-i\t))\inv \f_2}_\Kc^2 \right) \, d\t\\
\nonumber
&& \leq \frac {c_{M,\beta}} {\e} \nr{\Phi}_{\Kc\times \Kc}^2,
\end{eqnarray}
and we have a similar estimate with $(\Gc-(\e-i\t))\inv$ replaced by $(\Gc^*-(\e+i\t))\inv$. Then for $t > 0$, $\e > 0$ and $\Phi, \Psi \in \Kc \times \Kc$ we use the identity
\[
\innp{e^{t\Gc} \Phi}{\Psi}_{\Kc\times \Kc} = \frac 1 {2i\pi t} \int_{\t \in \R} e^{(\e-i\t)t} \innp{(\Gc-(\e-i\t))^{-2} \Phi}{\Psi}_{\Kc\times \Kc} \, d\t
\]
(see for instance Corollary III.5.16 in \cite{engel}). Applied with $\e = 1/t$ this gives 
\begin{align*}
\abs{\innp{e^{t\Gc} \Phi}{\Psi}}
& \leq \frac {\e e}{2\pi} \int_{\t \in \R}  \nr{(\Gc-(\e-i\t))\inv \Phi} \nr{(\Gc^*-(\e+i\t))\inv \Psi} \, d\t .
\end{align*}
By the Cauchy-Schwarz inequality and \eqref{estim-res-Gc} we obtain 
\[
\abs{\innp{e^{t\Gc} \Phi}{\Psi}} \leq \tilde c_{M,\b} \nr{\Phi}_{\Kc \times \Kc} \nr{\Psi}_{\Kc \times \Kc}.
\]
This gives a bound for $e^{t\Gc}$, and in particular there exists $C_{M,\b} > 0$ such that for all $t \geq 0$ we have
\begin{equation} \label{estim-etG-B}
\nr{e^{tG} B}_{\Lc(\Kc)} \leq \frac {C_{M,\beta}} {\pppg t}.
\end{equation}

	\detail 
	{
	\begin{align*}
	- (G-(\e-i\t)) \int_0^{+\infty} e^{it\t} e^{-t\e} e^{tG} \f \, dt = - (G-(\e-i\t)) \int_0^{+\infty} e^{t(G-(\e-i\t))} \f \, dt = - \int_0^{+\infty} \frac d {dt} e^{t(G-(\e-i\t))} \f \, dt = \f.
	\end{align*}
	\begin{align*}
	\int_\R \nr{(G - (\e-i\t))\inv \f}_\Kc^2 \, d\t
	& = 2 \pi \int_0^{+\infty} e^{-2\e t} \nr{e^{tG} \f}^2 \, dt
	& \leq 2 \pi M^2 \nr{\f}^2 \int_0^{+\infty} e^{-2 \e t} \, dt 
	& \leq \frac {\pi M^2}{\e} \nr{\f}_\Kc^2. 
	\end{align*}
	}

\stepp We prove the second statement of Theorem \ref{th-GPH-BT}. Since the resolvent is continuous on the imaginary axis, it is bounded on any compact subset. Thus it is enough to prove the estimate for $\abs \tau \geq 1$. By the resolvent identity we can prove by induction on $\kappa \in \N^*$ that 
\[
(G+i\tau)\inv (G+\nu\inv)^{-\kappa} = \frac {(G+i\tau)\inv}{(\nu\inv-i\tau)^\kappa} - \sum_{k=1}^{\kappa} \frac {(G+\nu\inv)^{-(\kappa+1-k)}}{(\nu\inv-i\tau)^k}.
\]
Then
\begin{align*}
\nr{(G+i\t)\inv (\nu G + 1)^{-\kappa}}
& = \frac 1 {\nu^\kappa} \nr{(G+i\t)\inv (G + \nu\inv)^{-\kappa}}\\
& \leq \frac {\nr{(G+i\tau)\inv}} {(1+\nu^2 \tau^2)^{\frac \kappa 2}} + \sum_{k=1}^\kappa \frac {\nr{(G+\nu\inv)\inv}^{\kappa+1-k}}{\nu^{\kappa-k} (1+\nu^2\tau^2)^{\frac k 2}}.
\end{align*}
By assumption on the resolvent and by \eqref{estim-res-G} this is uniformly bounded by a constant which only depends on $M$ and $c_1$.  Then by \eqref{estim-etG-B} applied with $B = (\nu G + 1)^{-\kappa}$ there exists $\tilde c$ which only depends on $c_1$ and $M$ such that for all $t \geq 0$ we have
\[
\nr{e^{tG} (\nu G + 1)^{-\kappa}}_{\Lc(\Kc)} \lesssim \frac {\tilde c} {\pppg t}.
\]
Finally, we can follow the proof of \cite[Proposition 3.1]{BatkaiEngPruSch06} to conclude.

\stepp Now we turn to the proof of the first statement of Theorem \ref{th-GPH-BT}. By \eqref{estim-etG-B} applied with $B=\Id_\Kc$ there exists $\tilde C \geq 0$ (which only depends on $C_1$ and $M$) such that 
\[
\nr{e^{tG}}_{\Lc(\Kc)} \leq \frac {\tilde C}{t}.
\]
In particular for $T = 2 \tilde C$ we get $\nr{e^{TG}} \leq 1/2$. Then for $t \geq T$ we denote by $k$ the integer part of $t/T$ and write 
\[
\nr{e^{tG}} \leq \nr{e^{TG}}^k \nr{e^{(t-kT)G}} \leq \frac M {2^k} \leq 2 M  e^{-\frac {t\ln(2)}T}.
\]
The proof is complete.
\end{proof}

Thus, in order to prove \eqref{decay-HH} and \eqref{decay-HH-noGCC} (and hence Theorems \ref{th-unif-decay} and \ref{th-no-GCC}), it is enough to prove the following resolvent estimates for $\Ac_\eta$:

\begin{theorem} \label{th-res-Ac}
\begin{enumerate}[(i)]
\item Let $\eta \geq 1$. Then $\Ac_\eta$ generates a bounded $C^0$-semigroup and its resolvent set contains the imaginary axis.
\item There exists $c_1 > 0$ such that for $\eta \geq 1$ and $\t \in \R$ we have 
\[
\nr{(\Ac_\eta + i\tau)\inv}_{\Lc(\HH)} \leq c_1 \left(1+\frac {\abs{\t}}{\eta^2} \right)^2.
\]
\item If moreover \eqref{hyp-GCC} holds, then there exists $C_1 > 0$ such that for $\eta \geq 1$ and $\t \in \R$ we have 
\[
\nr{(\Ac_\eta + i\tau)\inv}_{\Lc(\HH)} \leq C_1.
\]
\end{enumerate}
\end{theorem}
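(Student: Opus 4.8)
The plan is to reduce the three statements of Theorem~\ref{th-res-Ac} to resolvent estimates for a family of Schr\"odinger-type operators acting on $L^2(\R^d)$, as announced in the introduction, and then to prove those $L^2$ estimates directly. For $z \in \C$ with $\Re(z) > 0$, one checks by an elementary computation on the block matrix that $(\Ac_\eta - z)$ is invertible if and only if the operator
\[
P_\eta(z) := -\D + m + z a_\eta + z^2
\]
is invertible on $L^2(\R^d)$, and in that case the resolvent $(\Ac_\eta - z)\inv$ can be written explicitly in terms of $P_\eta(z)\inv$, $z$, $a_\eta$ and $\D$. Setting $z = -i\tau$ (so we look at the imaginary axis as in the statement), the question becomes: estimate $\nr{P_\eta(-i\tau)\inv}$ and the associated quantities $\nr{\D P_\eta(-i\tau)\inv}$, $\nr{\nabla P_\eta(-i\tau)\inv}$, $\nr{a_\eta^{1/2} P_\eta(-i\tau)\inv}$ in $\Lc(L^2)$, with the right dependence on $\eta$ and $\tau$. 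This first reduction step is the content of Section~\ref{sec-res-from-L2-to-Hc}, and I would organize the argument so that (i) (generation of a contraction semigroup and invertibility of $\Ac_\eta - i\tau$) follows from dissipativity of $\Ac_\eta$ (Proposition~\ref{prop-Ac-diss}) together with the observation that $i\tau$ is never in the spectrum: for $\tau \neq 0$ this is a standard unique-continuation-free argument since $-\D + m - \tau^2$ has no real spectrum below $m$ except... — more precisely, one uses that $\Im\langle P_\eta(-i\tau) u, u\rangle = -\tau \langle a_\eta u, u\rangle$ forces $a_\eta u = 0$, and then $u$ solves $(-\D + m - \tau^2)u = 0$ in $L^2(\R^d)$, which has only the trivial solution; for $\tau = 0$ one uses coercivity of $-\D + m$.

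The heart of the matter is then the family of $L^2$ estimates, and these split according to whether we use \eqref{hyp-GCC} or not. Without the geometric condition, I expect to prove a low-frequency-type resolvent bound using only \eqref{hyp-minor-a}: one wants something like $\nr{u}_{L^2} \lesssim \nr{P_\eta(-i\tau) u}_{L^2} + \langle\tau\rangle^{-1}\eta^2\langle\tau\rangle(\dots)$ — in any case an estimate of the form that feeds into Theorem~\ref{th-res-Ac}(ii). The natural tool is a \emph{Poincar\'e--Wirtinger inequality on the periodicity cell}: since $a \geq \a_0 \1_\o$ with $\o$ open and $\Z^d$-periodic, rescaling by $\eta$ one gains that on each cube of side $\eta^{-1}$ the $L^2$ mass of $u$ is controlled by the $L^2$ mass of $\nabla u$ on that cube (at cost $\eta^{-2}$) plus the $L^2$ mass of $a_\eta^{1/2} u$ on that cube. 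Summing over cells gives a global inequality $\nr{u}_{L^2}^2 \lesssim \eta^{-2}\nr{\nabla u}_{L^2}^2 + \nr{a_\eta^{1/2} u}_{L^2}^2$ — and it is precisely here that the gain in $\eta$ (hence the improved loss of regularity in Theorem~\ref{th-no-GCC}, the $\eta^{-2}$ factor) comes from. Combining this with the real and imaginary parts of $\langle P_\eta(-i\tau)u, u\rangle$ (which control $\nr{\nabla u}_{L^2}^2$, $m\nr{u}_{L^2}^2 - \tau^2\nr{u}_{L^2}^2$ and $\tau\nr{a_\eta^{1/2}u}_{L^2}^2$ respectively) and absorbing terms, one should close the estimate; the slightly delicate point is the regime $\tau^2 \sim m$ where the quadratic form in $\nr{u}_{L^2}^2$ degenerates, and there one really needs the Poincar\'e input rather than coercivity.

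Under \eqref{hyp-GCC} we need the \emph{uniform} (in both $\eta$ and $\tau$) bound of (iii), and this is the step I expect to be the main obstacle. Here the contribution of low and moderate frequencies $|\tau| \lesssim 1$ is handled as above (GCC implies \eqref{hyp-minor-a} after shrinking, giving a uniform bound on a compact $\tau$-range), so the real work is $|\tau| \to \infty$, which is a high-frequency/semiclassical regime with semiclassical parameter $h = 1/|\tau|$. Following \cite{BurqJo}, one argues by contradiction: a sequence $h_n \to 0$, $\eta_n \geq 1$, and normalized $u_n$ with $\nr{P_{\eta_n}(-i/h_n) u_n}_{L^2} = o(h_n)$, and one extracts a semiclassical defect measure $\mu$ on $T^*\R^d$. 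The measure is invariant under the free geodesic flow (the damping term $z a_\eta$ is lower order, contributing $O(h)$ since $a$ is bounded), and the damping forces $\int a_\eta\,d(\text{something}) \to 0$; the GCC in the form \eqref{hyp-GCC}, whose constants $T, \a$ do \emph{not} depend on $\eta$, then forces $\mu = 0$, contradicting $\nr{u_n} = 1$ (one must also rule out escape of mass to infinity and to zero frequency, using that $\tau^2 \gg m$). The crucial subtlety — and the reason the results of \cite{BurqJo} do not directly apply — is that $a_{\eta_n}$ is \emph{not} equicontinuous, so one cannot quantize $a_{\eta_n}$ as a fixed-class symbol and pass to the limit naively; instead, replacing $a$ by a smooth $\tilde a \leq a$ satisfying \eqref{hyp-GCC}, the rescaled symbol $\tilde a(\eta_n \cdot)$ has derivatives blowing up like $\eta_n^k$. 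The resolution I would pursue is to treat $\tilde a_{\eta_n}$ at a \emph{coarser} semiclassical scale: observe that $\tilde a_{\eta_n}$ oscillates at scale $\eta_n^{-1}$, which is $\gg h_n$ as long as $\eta_n h_n \to 0$, so relative to $h_n$ it still behaves like a symbol of order zero with seminorms $O((\eta_n h_n)^{-k}) = o(h_n^{-k})$, and the defect-measure machinery goes through with errors that vanish; in the opposite regime $\eta_n h_n \gtrsim 1$ — i.e.\ the damping oscillates faster than the wave — one should instead exploit the periodicity directly, using that over one wavelength the wave already samples the full average $\pppg{a}_T \geq \a$ of the damping, so a cruder real-part estimate (essentially the Poincar\'e-type argument from the non-GCC case, but now with a genuine lower bound rather than $\eta^{-2}$ loss) suffices. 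Splitting the high-frequency analysis into these two sub-regimes according to the size of $\eta_n h_n$, and checking the estimates match at the transition, is where the bulk of the technical care will go.
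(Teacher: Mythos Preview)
Your overall architecture (reduce to $L^2$ estimates for $P_\eta(-i\tau)$, then split according to whether \eqref{hyp-GCC} holds) matches the paper, but two of the three key steps have problems.

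\medskip

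\textbf{The gap in (ii).} Your Poincar\'e--Wirtinger inequality $\nr{u}_{L^2}^2 \lesssim \eta^{-2}\nr{\nabla u}_{L^2}^2 + \nr{a_\eta^{1/2}u}_{L^2}^2$ is correct, but it does not close the estimate for $|\tau| \gtrsim \eta$. Feeding in $\nr{\nabla u}^2 \leq (\tau^2-m)_+\nr{u}^2 + \nr{f}\nr{u}$ (from the real part of $\innp{f}{u}$) you get
\[
\nr{u}^2 \lesssim \eta^{-2}\tau^2 \nr{u}^2 + \big(\eta^{-2} + |\tau|^{-1}\big)\nr{f}\,\nr{u},
\]
and the first term on the right can be absorbed only when $|\tau| \lesssim \eta$. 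The required bound $\nr{R^\eta(\tau)} \lesssim \pppg{\tau}^{-1}(1+|\tau|/\eta^2)^2$ must hold for \emph{all} $\tau$, in particular for $\eta \ll |\tau| \lesssim \eta^2$ where it reads $\nr{u}\lesssim |\tau|^{-1}\nr f$; Poincar\'e gives nothing there. The paper's route is genuinely different: it first rescales $x \mapsto x/\eta$ so the operator becomes $-\eta^2\Delta + m - i\tau a - \tau^2$ with the \emph{fixed} damping $a$, rewrites this as $(-\Delta - \lambda)u = \eta^{-2}(f+i\tau a u)$ with $\lambda = (\tau^2-m)/\eta^2$, and then invokes the observability estimate of Wunsch (Proposition~\ref{prop-Wun}),
\[
\nr{u}_{L^2} \leq C\big(\nr{(-\Delta-\lambda)u}_{L^2} + \nr{u}_{L^2(\omega)}\big),
\]
which is \emph{uniform in $\lambda\in\R$}. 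This uniformity (a deep fact about the torus, transferred to $\R^d$ by Floquet--Bloch) is precisely what your Poincar\'e argument lacks, and is the actual source of the $(1+|\tau|/\eta^2)^2$ bound.

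\medskip

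\textbf{The detour in (iii).} Your plan to run the defect-measure contradiction directly with the oscillating symbol $a_{\eta_n}$, splitting into sub-regimes according to $\eta_n h_n$, is not clearly workable: in your ``first regime'' the seminorms of $\tilde a_{\eta_n}$ are $O((\eta_n h_n)^{-k})$, which are still unbounded, so standard pseudodifferential calculus does not apply without a genuine two-scale argument you have not supplied. The paper avoids this entirely by the same rescaling as above: after $x\mapsto x/\eta$ the operator is $-\eta^2\Delta + m - i\tau a - \tau^2$, and setting $h=\eta/\tau$, $\e=1/\eta$ this becomes $\tau^2(-h^2\Delta - i\e h\, a - 1) + m$. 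Now $a$ is a \emph{fixed smooth} symbol; all the $\eta$-dependence sits in the scalar coupling $\e$, and the semiclassical contradiction argument (Lemma~\ref{lem-high-freq}) only has to track that one parameter. Since (ii) already covers $|\tau|\lesssim \eta^2$, one only needs $h\to 0$, and the proof is a routine G\aa rding/commutator computation with the fixed symbol $a_\infty$.

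\medskip

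\textbf{A smaller point on (i).} Your argument that $a_\eta u=0$ forces $(-\Delta+m-\tau^2)u=0$ and hence $u=0$ in $L^2(\R^d)$ shows injectivity of $P_\eta(-i\tau)$, but not bounded invertibility: for $\tau^2\geq m$ the point $0$ lies in the essential spectrum of $-\Delta+m-\tau^2$, and the perturbation $-i\tau a_\eta$ is not relatively compact. The paper handles this by Floquet--Bloch reduction to the torus (Proposition~\ref{prop-Reta-existence}), where the resolvent is compact, injectivity (via unique continuation) implies invertibility, and continuity in $\sigma$ gives a uniform bound.
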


\section{Resolvent estimates in the energy space} \label{sec-res-from-L2-to-Hc}

In this section we discuss the proof of Theorem \ref{th-res-Ac}. Introducing the wave operator $\Ac_\eta$ on $\HH$ was useful to apply the general results of the semigroup theory. However, to prove concrete resolvent estimates we have to go back to the analysis of Schr\"odinger operators on $L^2(\R^d)$.

\begin{proposition} \label{prop-res-Ac-Rz}
Let $\eta \geq 1$ and $z \in \C$. Then $(\Ac_\eta +iz)$ is invertible with bounded inverse on $\Hc$ if and only if the operator
$
{\big(-\D + m -iza_\eta -z^2\big)}
$
is invertible with inverse bounded on $L^2(\R^d)$, and in this case we have 
\begin{equation} \label{expr-res-Ac}
(\Ac_\eta +iz)\inv =
\begin{pmatrix}
R^\eta(z) (-a_\eta + iz) & -R^\eta(z) \\
1 + R^\eta(z) (iz a_\eta+z^2) & i z R^\eta(z)
\end{pmatrix},
\end{equation}
where we have set 
\[
R^\eta(z) = \big(-\D + m -iza_\eta -z^2\big)\inv.
\]
\end{proposition}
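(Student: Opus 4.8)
The plan is to solve the resolvent equation $(\Ac_\eta+iz)U=F$ by hand and read off \eqref{expr-res-Ac}. Writing $U=(u_1,u_2)\in\Dom(\Ac)$ and $F=(f_1,f_2)\in\HH$, this equation is
\[
\begin{cases}
iz\,u_1 + u_2 = f_1,\\
(\D-m)u_1 + (iz-a_\eta)u_2 = f_2 .
\end{cases}
\]
The first line forces $u_2=f_1-iz\,u_1$; substituting this into the second line and rearranging gives
\[
\big(-\D + m - iz\,a_\eta - z^2\big)u_1 = (iz-a_\eta)f_1 - f_2 .
\]
Hence $U$ solves $(\Ac_\eta+iz)U=F$ if and only if $u_1$ solves this scalar equation and $u_2=f_1-iz\,u_1$; expanding $u_2$ in terms of $R^\eta(z)$ produces exactly the matrix in \eqref{expr-res-Ac}. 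The remaining work is to turn this reversible computation into a rigorous equivalence of invertibility statements.

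For the implication ``$\Leftarrow$'', I would assume $R^\eta(z)$ is well defined and bounded on $L^2(\R^d)$ and let $\Rc$ be the operator on $\HH$ defined by the right-hand side of \eqref{expr-res-Ac}. The first thing to record is elliptic regularity: if $g\in L^2(\R^d)$ and $v=R^\eta(z)g$, then $-\D v = g-(m-iz\,a_\eta-z^2)v\in L^2(\R^d)$, so $v\in H^2(\R^d)$ and $R^\eta(z)$ maps $L^2(\R^d)$ boundedly into $H^2(\R^d)$. Since $a_\eta$ is bounded, multiplication by $a_\eta$ is bounded on $L^2(\R^d)$ and from $H^1(\R^d)$ into $L^2(\R^d)$, so running through the four entries of \eqref{expr-res-Ac} one checks that $\Rc$ maps $\HH=H^1(\R^d)\times L^2(\R^d)$ boundedly into $H^2(\R^d)\times H^1(\R^d)=\Dom(\Ac)$, in particular boundedly into $\HH$. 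Then, reading the computation of the first paragraph in both directions, a direct substitution gives $(\Ac_\eta+iz)\Rc=\Id_\HH$ and $\Rc(\Ac_\eta+iz)=\Id_{\Dom(\Ac)}$, so $(\Ac_\eta+iz)$ is boundedly invertible with $(\Ac_\eta+iz)\inv=\Rc$, which is \eqref{expr-res-Ac}.

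For ``$\Rightarrow$'', I would assume $(\Ac_\eta+iz)\inv\in\Lc(\HH)$ and prove that $P:=-\D+m-iz\,a_\eta-z^2$, with domain $H^2(\R^d)$, is a bijection onto $L^2(\R^d)$ with bounded inverse. For injectivity: if $w\in H^2(\R^d)$ and $Pw=0$, then $(w,-iz\,w)\in\Dom(\Ac)$ lies in the kernel of $\Ac_\eta+iz$, hence is zero. For surjectivity: given $g\in L^2(\R^d)$, set $(u_1,u_2):=(\Ac_\eta+iz)\inv(0,-g)\in\Dom(\Ac)$; reading off the two lines of $(\Ac_\eta+iz)(u_1,u_2)=(0,-g)$ yields $u_2=-iz\,u_1$ and $Pu_1=g$. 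Thus $R^\eta(z)=P\inv$ is well defined, and the estimate $\nr{R^\eta(z)g}_{L^2}\le\nr{(u_1,u_2)}_{\HH}\le\nr{(\Ac_\eta+iz)\inv}_{\Lc(\HH)}\nr{g}_{L^2}$ shows it is bounded on $L^2(\R^d)$. I do not expect a genuine obstacle here; the only point needing some care throughout is that $a_\eta$ is merely assumed bounded, so every multiplication by $a_\eta$ must be followed by an application of $R^\eta(z)$ in order to recover the missing $L^2$- and $H^2$-regularity. Apart from that, the proof is bookkeeping on mapping properties.
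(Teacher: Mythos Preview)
Your proposal is correct and follows essentially the same approach as the paper: for ``$\Leftarrow$'' you verify that \eqref{expr-res-Ac} defines a bounded operator $\HH\to\Dom(\Ac)$ and check by direct computation that it inverts $\Ac_\eta+iz$, and for ``$\Rightarrow$'' you recover $R^\eta(z)$ as the first component of $(\Ac_\eta+iz)^{-1}(0,-g)$, exactly as the paper does. Your write-up is simply more explicit about the mapping properties and adds the injectivity and boundedness checks that the paper leaves to the reader.
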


\begin{proof}
Assume that $R^\eta(z)$ is well defined. It is a bounded operator from $L^2(\R^d)$ to $H^2(\R^d)$, so the right-hand side of \eqref{expr-res-Ac} defines a bounded operator from $\HH$ to $\Dom(\Ac)$. Then we can check by direct computation that it is a bounded inverse for $(\Ac_\eta + iz)$. Conversely, assume that $-iz$ belongs to the resolvent set of $\Ac_\eta$. For $g \in L^2(\R^d)$ we define $Rg$ as the first component of $(\Ac_\eta + iz)\inv G$, for $G = (0,-g) \in \HH$. This defines a bounded operator $R$ from $L^2(\R^d)$ to $H^2(\R^d)$ and we can check, again by direct computation, that it is an inverse for ${(-\D + m - iza_\eta - z^2)}$.
\end{proof}

We begin the proof of Theorem \ref{th-res-Ac} with the statement that $\Ac_\eta$ generates a bounded $C^0$-semigroup. For this we prove that $\Ac_\eta$ is \textsf{m}-dissipative. By the usual Lummer-Phillips Theorem, this ensures that $\Ac_\eta$ generates for all $\eta \geq 1$ a contractions semigroup. We recall that an operator $T$ with domain $\Dom(T)$ on a Hilbert space $\Kc$ is said to be dissipative if for all $\f \in \Dom(T)$ we have 
\[
\Re \innp{T\f}{\f} \leq 0.
\]
Moreover $T$ is said to be \textsf{m}-dissipative if some (and hence any) $\z \in \C$ with $\Re(\z) > 0$ belongs to the resolvent set of $T$.

\begin{proposition} \label{prop-Ac-diss}
For all $\eta \geq 1$ the operator $\Ac_\eta$ is \textsf{m}-dissipative on $\HH$.
\end{proposition}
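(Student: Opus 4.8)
The plan is to establish the two defining properties of \textsf{m}-dissipativity separately: dissipativity, which is the infinitesimal form of the energy identity recalled in the introduction, and the resolvent condition, which will follow at once from Proposition~\ref{prop-res-Ac-Rz} together with elementary spectral theory on $L^2(\R^d)$.

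For dissipativity I would take $U = (u,v) \in \Dom(\Ac) = H^2(\R^d) \times H^1(\R^d)$, so that $\Ac_\eta U = (v,\, \D u - m u - a_\eta v)$, and expand, using the definition \eqref{def-norme-H1} of the $H^1$-inner product,
\[
\innp{\Ac_\eta U}{U}_\HH = \innp{\nabla v}{\nabla u}_{L^2} + m \innp{v}{u}_{L^2} + \innp{\D u}{v}_{L^2} - m \innp{u}{v}_{L^2} - \int_{\R^d} a_\eta \abs{v}^2 .
\]
Since $u \in H^2(\R^d)$ and $v \in H^1(\R^d)$, integration by parts on $\R^d$ gives $\innp{\D u}{v}_{L^2} = - \innp{\nabla u}{\nabla v}_{L^2}$ with no boundary contribution; taking real parts, the first four terms cancel in pairs (using that $m$ is real and $\Re\innp{f}{g} = \Re\innp{g}{f}$), so that
\[
\Re \innp{\Ac_\eta U}{U}_\HH = - \int_{\R^d} a_\eta(x) \abs{v(x)}^2 \, dx \leq 0 ,
\]
because $a \geq 0$. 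This inequality holds for every $\eta \geq 1$.

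For the resolvent condition it suffices to exhibit one $\z \in \C$ with $\Re \z > 0$ lying in the resolvent set of $\Ac_\eta$. I would take $\z = 1$ and apply Proposition~\ref{prop-res-Ac-Rz} with $z = i$, so that $iz = -1 = -\z$; this reduces the claim to the invertibility, with bounded inverse on $L^2(\R^d)$, of the operator $-\D + m - iz a_\eta - z^2$, which for $z = i$ equals $-\D + (m+1) + a_\eta$. Since $a$ is bounded (being continuous and periodic), $a_\eta$ is a bounded real-valued multiplication operator, hence this is a self-adjoint operator with domain $H^2(\R^d)$; its quadratic form on $u \in H^1(\R^d)$ is $\nr{\nabla u}_{L^2}^2 + (m+1)\nr{u}_{L^2}^2 + \int_{\R^d} a_\eta \abs{u}^2 \geq (m+1)\nr{u}_{L^2}^2$, so its spectrum is contained in $[m+1,+\infty)$ and in particular $0$ is not in it. Thus $-\D + (m+1) + a_\eta$ is boundedly invertible, and by Proposition~\ref{prop-res-Ac-Rz} the operator $\Ac_\eta - 1$ is invertible with bounded inverse, i.e. $1$ (which has positive real part) belongs to the resolvent set of $\Ac_\eta$.

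Combining the two points shows that $\Ac_\eta$ is \textsf{m}-dissipative; together with the density of $\Dom(\Ac)$ in $\HH$ this gives, via the Lumer--Phillips theorem, that $\Ac_\eta$ generates a contraction semigroup on $\HH$. I do not anticipate a genuine obstacle here: the only points needing a line of justification are the integration by parts on the whole space (density of $C_c^\infty(\R^d)$ together with the Sobolev regularity of $u$ and $v$) and the bookkeeping of signs so that the dissipation term ends up nonpositive.
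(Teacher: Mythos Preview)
Your proof is correct and follows essentially the same approach as the paper: the dissipativity computation is identical (expand $\innp{\Ac_\eta U}{U}_\HH$, integrate by parts, observe that only the nonpositive term $-\innp{a_\eta v}{v}_{L^2}$ survives in the real part), and the resolvent condition is handled in the same way by invoking Proposition~\ref{prop-res-Ac-Rz} with $z=i$ and noting that $-\D + m + 1 + a_\eta$ is self-adjoint and bounded below by $m+1$.
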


\begin{proof}
Let $U = (u,v) \in \Dom(\Ac)$. We have 
\[
\innp{\Ac_\eta U}{U}_\HH = \innp{\nabla v}{\nabla u}_{L^2} + m \innp{v}{u}_{L^2} + \innp{(\D - m) u}{v}_{L^2} - \innp{a_\eta v}{v}_{L^2},
\]
so 
\[
\Re \innp{\Ac_\eta U}{U}_\HH = - \innp{a_\eta v}{v}_{L^2} \leq 0.
\]
This proves that $\Ac_\eta$ is dissipative. On the other hand, the operator $-\D + m + a_\eta + 1$ is self-adjoint and bounded below by $m+1$. In particular it is invertible with bounded inverse on $L^2(\R^d)$. By Proposition \ref{prop-res-Ac-Rz}, this implies that 1 is in the resolvent set of $\Ac_\eta$, hence $\Ac_\eta$ is \textsf{m}-dissipative.
\end{proof}

The estimates for the resolvent of $\Ac_\eta$ will be deduced from estimates of the ``resolvent'' $R^\eta(\t)$ defined in Proposition \ref{prop-res-Ac-Rz}. To work with a fixed damping, we first rescale the problem. For $\eta \geq 1$, $u \in L^2$ and $x \in \R^d$ we set 
\[
(\Th_\eta u)(x) = \eta^{\frac d 2} u(\eta x).
\]
This defines a unitary operator $\Th_\eta$ on $L^2(\R^d)$ and we have 
\[
\Th_\eta\inv \big(-\D + m -i \t a_\eta - \t^2\big) \Th_\eta = \big(-\eta^2 \D + m -i \t a - \t^2\big).
\]
In particular the operator $\big(-\eta^2 \D + m -i \t a - \t^2\big)$ has an inverse bounded on $L^2(\R^d)$ if and only if $R^\eta(\t)$ is well defined, and in this case, if we set 
\begin{equation} \label{def-Reta}
R_\eta(\t) = \big(-\eta^2 \D + m -i \t a - \t^2\big)\inv,
\end{equation}
then 
\begin{equation} \label{eq-norm-res-eta}
\nr{R^\eta(\t)}_{\Lc(L^2)} = \nr{R_\eta(\t)}_{\Lc(L^2)}.
\end{equation}
Finally, Theorem \ref{th-res-Ac} will be a consequence on the following estimates on $R_\eta(\t)$:

\begin{proposition} \label{prop-res-eta}
\begin{enumerate}[(i)]
\item For all $\eta \geq 1$ and $\t \in \R$ the operator $\big(-\eta^2 \D + m -i \t a - \t^2\big)$ is invertible with bounded inverse on $L^2(\R^d)$.
\item There exists $c_2 > 0$ such that for $\eta \geq 1$ and $\t \in \R$ we have 
\[
\nr{R_\eta(\t)}_{\Lc(L^2)} \leq \frac {c_2}{\pppg \tau} \left(1 + \frac {\abs \t} {\eta^2} \right)^2.
\]
\item If \eqref{hyp-GCC} holds then there exists $C_2 > 0$ such that for $\eta \geq 1$ and $\t \in \R$ we have 
\[
\nr{R_\eta(\t)}_{\Lc(L^2)} \leq \frac {C_2} {\pppg \t}.
\]
\end{enumerate}
\end{proposition}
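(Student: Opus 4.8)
The plan is to establish the a priori estimate
$\nr{u}_{L^2} \le \frac{c_2}{\pppg\tau}\big(1 + \frac{\abs\tau}{\eta^2}\big)^2 \nr{(-\eta^2\D + m - i\tau a - \tau^2)u}_{L^2}$ for all $u \in H^2(\R^d)$ — and the sharper one with $\frac{C_2}{\pppg\tau}$ under \eqref{hyp-GCC}. Since the same arguments apply to the formal adjoint $-\eta^2\D + m + i\tau a - \tau^2$, this yields at once that $-\eta^2\D + m - i\tau a - \tau^2$ is bounded below with dense range, hence boundedly invertible, which is (i), together with the bounds (ii)--(iii). Complex conjugation exchanges the operators for $\tau$ and $-\tau$, so it suffices to treat $\tau \ge 0$. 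If $\tau^2 \le m/2$ then $\Re\innp{(-\eta^2\D + m - i\tau a - \tau^2)u}{u}_{L^2} = \eta^2\nr{\nabla u}_{L^2}^2 + (m-\tau^2)\nr u_{L^2}^2 \ge \frac m2 \nr u_{L^2}^2$, which gives the estimate for free; so we assume $\tau \ge \tau_0 := \sqrt{m/2}$. For (iii) one may moreover replace $a$ by the smooth $\tilde a \le a$ satisfying \eqref{hyp-GCC}, by a standard perturbation argument: writing the operator as $(-\eta^2\D + m - i\tau\tilde a - \tau^2) - i\tau(a-\tilde a)$ with $a - \tilde a \ge 0$, if $\innp{(a-\tilde a)u}{u}$ is a fixed fraction of $\nr u^2$ then $\abs{\Im\innp{\cdot\,u}{u}} \ge \tau\innp{(a-\tilde a)u}{u} \gtrsim \tau\nr u^2$, and otherwise $\nr{(a-\tilde a)u}$ is a small multiple of $\nr u$ and one concludes from the bound for $\tilde a$; so (iii) for $\tilde a$ gives (iii) for $a$.

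For $\tau \ge \tau_0$ I would set $h := \eta/\tau$ and rescale, so that the operator becomes $\tau^{-2}(-\eta^2\D + m - i\tau a - \tau^2) = -h^2\D - 1 - i\frac h\eta a + \frac{mh^2}{\eta^2}$, with $\nr{R_\eta(\tau)}_{\Lc(L^2)} = \frac{h^2}{\eta^2}\nr{(-h^2\D - 1 - i\frac h\eta a + \frac{mh^2}{\eta^2})\inv}_{\Lc(L^2)}$. Two features are decisive: the wild oscillations of the original $a_\eta$ are gone — we now face the \emph{fixed} periodic function $a$ (smooth $\tilde a$ for (iii)) — and $\eta$ enters only through the small prefactor $1/\eta$ in front of the damping, plus a harmless $O(h^2/\eta^2)$ term. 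I would also use the Bloch--Floquet decomposition $L^2(\R^d) \cong \int_{[0,2\pi)^d}^\oplus L^2(\R^d/\Z^d)\,d\theta$, which conjugates the operator to the direct integral over $\theta$ of $-h^2(\nabla+i\theta)^2 - 1 - i\frac h\eta a + \frac{mh^2}{\eta^2}$ on the torus, so everything takes place on a compact manifold with $\theta$ a parameter and one only needs estimates uniform in $\theta$. When $h$ stays bounded ($\tau_0 \le \tau \lesssim \eta$, i.e. the damping oscillates at least as fast as a wavelength): for each $\theta$ only $O(1)$ Bloch modes $e^{2\pi ik\cdot x}$ are near-resonant, and after rescaling the coupling between this finite-dimensional block and its orthogonal (elliptic) complement is $O(1/\tau)$, so a Schur-complement argument reduces matters to the near-resonant block, on which the damping is positive definite — the Gram matrix $\big(\innp{a\,e^{2\pi ik\cdot x}}{e^{2\pi il\cdot x}}\big)_{k,l}$ is positive definite because a non-zero trigonometric polynomial cannot vanish on the open set $\omega$ where $a \gtrsim 1$, with a lower bound uniform in $\theta$. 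This gives $\nr{(\,\cdot\,)\inv}_{\Lc(L^2)} \lesssim \tau$, hence $\nr{R_\eta(\tau)}_{\Lc(L^2)} \lesssim 1/\tau \lesssim \pppg\tau\inv$, for both (ii) and (iii): no geometric condition is needed at this scale, where the damping is everywhere.

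When $h \to 0$ ($\tau \gg \eta$) we are in the genuinely semiclassical regime. For (iii) this is the classical semiclassical damped-wave resolvent estimate on the torus (as in the cited works, with the parameter $\eta$ tracked): the operator is elliptic off the characteristic set $\{\abs\x = 1\}$, and along the geodesic flow $x \mapsto x + 2t\x$, $\abs\x = 1$, assumption \eqref{hyp-GCC} for $\tilde a$ rescales to $\frac1T\int_0^T \frac{\tilde a}\eta(x+2t\x)\,dt \ge \frac\a\eta$; the standard defect-measure/contradiction argument then yields $\nr{(\,\cdot\,)\inv}_{\Lc(L^2)} \lesssim \eta/h$, the factor $\eta$ being precisely the price of the weak damping, hence $\nr{R_\eta(\tau)}_{\Lc(L^2)} \lesssim \frac{h^2}{\eta^2}\cdot\frac\eta h = \frac1\tau \lesssim \pppg\tau\inv$. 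For (ii), via Bloch we are facing the damped wave equation on $\R^d/\Z^d$ \emph{without} geometric control: the semiclassical defect measure is still invariant under the geodesic flow and still vanishes on $\{a>0\}$, but because of the closed geodesics of the torus that miss $\omega$ it need not vanish, and the soft contradiction argument breaks down. Here I would run a two-microlocal analysis in the rational directions (in the spirit of the damped-wave estimates on tori), tracking the $1/\eta$ size of the damping against the semiclassical wavelength $h = \eta/\tau$: the damping $\frac h\eta a$ drops below the subprincipal scale $h^2$ exactly when $\tau \gtrsim \eta^2$, and that is where the loss appears, producing $\nr{(\,\cdot\,)\inv}_{\Lc(L^2)} \lesssim \frac\eta h\big(1 + \frac1{\eta h}\big)^2$ and hence $\nr{R_\eta(\tau)}_{\Lc(L^2)} \lesssim \pppg\tau\inv\big(1 + \frac\tau{\eta^2}\big)^2$.

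The main obstacle is precisely this geometric-control-free estimate — item (ii) in the semiclassical regime. Unlike the exponential-decay case (iii), the invariant defect measure is not forced to be zero, so one cannot argue by a soft compactness contradiction and must instead carry out the delicate second microlocalization near the periodic rays of $\R^d/\Z^d$ while keeping \emph{two} parameters, $h$ and $\eta$, under simultaneous control — the point being to locate the threshold $\tau \sim \eta^2$ beyond which a high-frequency wave can evade a damping that oscillates on scale $1/\eta$. This threshold is exactly what produces the improved (smaller by $\eta^{-2}$) loss of regularity in Theorem~\ref{th-no-GCC}.
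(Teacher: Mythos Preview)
Your treatment of (iii) is essentially the paper's: rescale to a semiclassical problem with $h=\eta/\t$, observe that the damping enters with the small prefactor $\e=1/\eta$, and run the Lebeau-type contradiction argument tracking $\e$ to obtain $\nr{(-h^2\D-i\e h a-1)\inv}\lesssim (\e h)\inv$. The paper also notes that (ii) already covers $\abs\t\lesssim\eta^2$, so only the genuinely semiclassical range matters for (iii); your separate Schur-complement discussion of the ``bounded $h$'' window is not needed.

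For (ii) you take a much harder route than the paper, and you do not complete it. The paper does \emph{not} attack (ii) semiclassically. It invokes as a black box the observability estimate of Wunsch (stated in the paper as Proposition~\ref{prop-Wun}): for any nonempty $\Z^d$-periodic open set $\o$ there is $C>0$ such that
\[
\nr{u}_{L^2}\le C\big(\nr{(-\D-\l)u}_{L^2}+\nr{u}_{L^2(\o)}\big)
\]
uniformly in $\l\in\R$. Writing $f=(-\eta^2\D+m-i\t a-\t^2)u$ as $-\D u-\tfrac{\t^2-m}{\eta^2}u=\tfrac{f+i\t a u}{\eta^2}$ and applying this with $\l=(\t^2-m)/\eta^2$ gives
\[
\nr u\lesssim\frac{\nr f}{\eta^2}+\Big(1+\frac{\abs\t}{\eta^2}\Big)\nr{au},
\]
and then $\nr{au}^2\lesssim\nr{\sqrt a\,u}^2=-\t\inv\Im\innp{f}{u}$ plus Young's inequality closes the estimate in a few lines. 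So what you identify as the ``main obstacle'' --- a two-parameter second-microlocal analysis near the rational geodesics of the torus --- is exactly what the paper avoids by citing Wunsch; all the hard microlocal work is packaged inside that single uniform-in-$\l$ observability inequality, and the $\eta$-dependence then falls out by elementary algebra. Your proposed route may be viable in principle (it amounts to reproving a parametrized version of Wunsch's result), but it is both unnecessary here and left as a sketch.

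For (i) the paper argues differently from you: it uses the Floquet--Bloch reduction to the torus, compactness of the resolvent there, and unique continuation to exclude zero eigenvalues. Your alternative --- deduce invertibility from the a~priori estimates for the operator and its adjoint --- is also correct, but note that it presupposes (ii), so in your scheme (i) is only established once the hard part of (ii) is done.
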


The proof of Proposition \ref{prop-res-eta} is postponed to the following section. Here we show that it indeed implies Theorem \ref{th-res-Ac}.

\begin{proof}[Proof of Theorem \ref{th-res-Ac}, assuming Proposition \ref{prop-res-eta}]
\stepp Since $R_\eta(\t)$ is well defined for all $\eta \geq 1$ and $\t \in \R$, this is also the case for $R^\eta(\t)$ and hence for $(\Ac_\eta +i\t)\inv$ by Proposition \ref{prop-res-Ac-Rz}. Moreover, by \eqref{eq-norm-res-eta}, the estimates given for $R_\eta(\t)$ also hold for $R^\eta(\t)$. 

\stepp Assume that for $\eta \geq 1$ and $\t \in \R$ we have 
\[
\nr{R_\eta(\t)} \leq \frac {\k(\eta,\t)}{\pppg \t},
\]
where $\k$ is bounded below by a positive constant. For $\eta \geq 1$, $\t \in \R$ and $u$ in the Schwartz space $\Sc(\R^d)$ we have 
\begin{align*}
\nr{\nabla R^\eta(\t) u}_{L^2}^2
 = \innp{u}{R^\eta(\t) u} + \innp{(-m + i\t a_\eta + \t^2) R^{\eta}(\t)u}{R^\eta(\t) u}
 \lesssim \k(\eta,\t)^2 \nr{u}_{L^2}^2.
\end{align*}
This proves that 
\[
\nr{R^\eta(\t)}_{\Lc(L^2,H^1)} \lesssim \k(\eta,\tau).
\]
By duality we also have 
\[
\nr{R^\eta(\t)}_{\Lc(H\inv,L^2)} = \nr{R^\eta(\t)^*}_{\Lc(L^2,H^1)} = \nr{R^\eta(-\t)}_{\Lc(L^2,H^1)} \lesssim \k(\eta,\tau).
\]
Then, as above,
\begin{align*}
\nr{\nabla R^\eta(\t) \nabla u}_{L^2}^2
& = \innp{\nabla u}{R^\eta(\t) \nabla u} + \innp{(-m + i\t a_\eta + \t^2) R^{\eta}(\t) \nabla u}{R^\eta(\t) \nabla u}\\
& \lesssim \nr{\nabla R^\eta(\t) \nabla u}_{L^2} \nr{u}_{L^2} +  \pppg \tau^2 \k(\eta,\tau)^2.
\end{align*}
This yields 
\[
\nr{R^\eta(\t)}_{\Lc(H\inv,H^1)} \lesssim \pppg \t \k(\eta,\tau).
\]

\stepp Let $U = (u,v) \in \Sc \times \Sc$. By Proposition \ref{prop-res-Ac-Rz} we have 
\begin{align*}
\nr{(\Ac_\eta + i\t)\inv U}_{\HH}
& \lesssim \nr{R^\eta(\t) (-a_\eta+i\t)u}_{H^1} + \nr{R^\eta(\t)v}_{H^1}\\
& + \nr{u + R^\eta(\t)(i\t a_\eta + \t^2)u}_{L^2} + \nr{\t R^\eta(\t)v}_{L^2}.
\end{align*}
First, for $\abs \t \geq 1$,
\[
\nr{R^\eta(\t) (-a_\eta+i\t)u}_{H^1} = \frac {\nr{u}_{H^1}} \t  + \frac 1 \t \nr{R^\eta(\t) (\D-m) u}_{H^1} \lesssim \k(\eta,\tau) \nr{u}_{H^1}.
\]
This estimate also holds for $\abs \t \leq 1$ and, similarly,
\[
\nr{u + R^\eta(\t)(i\t a_\eta + \t^2)u}_{L^2} = \nr{R^\eta(\t)(-\D+m) u}_{L^2} \lesssim \k(\eta,\tau) \nr{u}_{H^1}.
\]
We also have 
\[
\nr{R^\eta(\t)v}_{H^1} + \nr{\t R^\eta(\t)v}_{L^2} \lesssim \k(\eta,\tau) \nr{v}_{L^2},
\]
so
\[
\nr{(\Ac_\eta + i\t)\inv U}_{\HH} \lesssim \k(\eta,\tau) \left(\nr{u}_{H^1} + \nr{v}_{L^2} \right) \lesssim \k(\eta,\tau) \nr{U}_\HH.
\]
Thus the second and third statements of Theorem \ref{th-res-Ac} follow from the corresponding statements of Proposition \ref{prop-res-eta}.
\end{proof}

\section{Resolvent estimates for the rescaled operator} \label{sec-resolvent-L2}

In this section we prove Proposition \ref{prop-res-eta}. This will conclude the proof of Theorems \ref{th-unif-decay} and \ref{th-no-GCC}. The three statements of Proposition \ref{prop-res-eta} are proved separately in Propositions \ref{prop-Reta-existence}, \ref{prop-estim-Reta} and \ref{prop-estim-Reta-GCC} below.\\

When working in a periodic setting, it is standard to introduce the Floquet-Bloch decomposition to reduce the problem on $\R^d$ to a family of problems on the torus. Here we use the notation of \cite{JolyRo}. For $u \in \Sc(\R^d)$ and $\s \in \R^d$ we set 
\[
u^\sharp_\s (x) = \sum_{n \in \Z^d} u(x + n) e^{-i(x+n)\cdot \s}.
\]
This defines for all $\s$ a $\Z^d$-periodic function and for $x \in \R^d$ we have 
\[
u(x) = \frac 1 {(2\pi)^d} \int_{\s \in [0,2\pi]^d} e^{ix\cdot \s} u^\sharp_\s(x) \, d\s.
\]
Moreover we have the Parseval identity 
\[
\nr{u}_{L^2}^2 = \frac 1 {(2\pi)^d} \int_{\s \in [0,2\pi]^d} \nr{u_\s^\sharp}_{L^2_\sharp}^2 \, d\s,
\]
where $L^2_\sharp$ is the set of $L^2_\loc$ and $\Z^d$-periodic functions on $\R^d$, endowed with the norm given by
\[
\nr{u_\s^\sharp}_{L^2_\sharp}^2 = \int_{[0,1]^d} \abs{u_\s^\sharp(x)}^2 \, dx.
\]

\begin{proposition} \label{prop-Reta-existence}
For all $\eta \geq 1$ and $\t \in \R$ the operator $\big(-\eta^2 \D + m - i\t a - \t^2 \big)$ has a bounded inverse on $L^2(\R^d)$.
\end{proposition}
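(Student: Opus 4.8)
The plan is to show that $P_\eta(\t) := -\eta^2\D + m - i\t a - \t^2$ is invertible on $L^2(\R^d)$ by separating the argument according to whether $\t^2$ lies below or above (roughly) $m$. Write $P_\eta(\t) = -\eta^2\D + (m-\t^2) - i\t a$, a Schrödinger-type operator with complex potential. Its numerical range is contained in $\{z : \Re z \geq m-\t^2,\ \Im z \leq 0\}$ (since $\eta^2\geq 1$, $a\geq 0$, and $-i\t a$ contributes a nonpositive imaginary part, up to sign conventions), so $P_\eta(\t)$ is a closed operator with domain $H^2(\R^d)$, and its spectrum is contained in the closure of that numerical range.

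First I would treat the easy regime $\t^2 \leq m$ (and also $\t = 0$): here the real part of the numerical range is bounded below by $m - \t^2 \geq 0$, but this alone is not quite enough when $\t^2 = m$. In that case I would use instead that $-i\t a$ has nonzero imaginary part wherever $a$ does not vanish: for $u \in H^2$,
\[
\Im \innp{P_\eta(\t) u}{u}_{L^2} = -\t \innp{a u}{u}_{L^2},
\]
so $\innp{P_\eta(\t)u}{u} = 0$ forces $\sqrt a\, u = 0$, i.e. $u$ supported in $\{a = 0\}$, and then $\Re\innp{P_\eta(\t)u}{u} = \eta^2\nr{\nabla u}^2 = 0$ (as $m-\t^2=0$), hence $u$ constant, hence $u \equiv 0$ since $a$ is not identically zero. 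A more robust route covering all $\abs\t \leq \sqrt m$ uniformly is to note $\Re\innp{P_\eta(\t)u}{u} + \nr{\sqrt a\, u}^2_{L^2}\abs\t^{-1}\cdot\abs\t \cdots$; cleaner: combine a multiple of the real and imaginary parts to get a strictly positive lower bound. Concretely, for any $\l>0$,
\[
\Re\innp{P_\eta(\t)u}{u} + \l\,\abs{\Im\innp{P_\eta(\t)u}{u}} \geq \eta^2\nr{\nabla u}^2 + (m-\t^2)\nr u^2 + \l\abs\t\nr{\sqrt a u}^2,
\]
and a standard compactness/contradiction argument (or a Poincaré-type inequality using that $\int_{[0,1]^d}$-averages are controlled by $\nr{\nabla u}^2 + \nr{\sqrt a u}^2$ in each period cell) gives $\gtrsim \nr u^2$. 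This shows $P_\eta(\t)$ is injective with closed range; the same estimate for $P_\eta(\t)^* = -\eta^2\D + (m-\t^2) + i\t a$ gives density of the range, so $P_\eta(\t)$ is boundedly invertible in this regime.

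For the main regime $\abs\t$ large (say $\t^2 > m$) I would use the Floquet–Bloch decomposition recalled just before the proposition, which conjugates $-\eta^2\D + m - i\t a - \t^2$ on $L^2(\R^d)$ to the direct integral over $\s \in [0,2\pi]^d$ of the operators $-\eta^2(\nabla + i\s)^2 + m - i\t a - \t^2$ acting on $L^2_\sharp$. Since $-\eta^2(\nabla+i\s)^2$ has compact resolvent on the torus, each fiber operator has discrete spectrum, and it is boundedly invertible iff $\t^2 + i\t\,(\text{something})$ is not an eigenvalue; the imaginary-part argument above works verbatim fiber by fiber (an eigenfunction with eigenvalue $0$ would be supported in $\{a=0\}$ and satisfy $-\eta^2(\nabla+i\s)^2 u = \t^2 u$ with $\t^2 > 0 = $ ... ), so each fiber is invertible. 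The point requiring care — and what I expect to be the main obstacle — is uniformity of the fiberwise inverses in $\s$, needed to get a bounded inverse on the direct integral: one must show $\sup_\s \nr{(\text{fiber operator})^{-1}}_{\Lc(L^2_\sharp)} < \infty$. This follows from continuity of the fiber operators in $\s$ (in the norm-resolvent sense, since the $\s$-dependence is a bounded perturbation relative to $-\D$) together with compactness of $[0,2\pi]^d$ and the fact that invertibility is an open condition, giving a locally—hence globally—finite bound. I would conclude by patching the two regimes together.
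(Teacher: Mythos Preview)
Your Floquet--Bloch approach for the main regime matches the paper's proof: reduce to fiber operators $P_\s$ on the torus, use compact resolvent to get discrete spectrum, show $0$ is not an eigenvalue, and invoke continuity of $\s \mapsto P_\s\inv$ together with compactness of $[0,2\pi]^d$ for the uniform bound. The paper does exactly this, but for \emph{all} $\t \neq 0$ (and treats $\t = 0$ trivially), so your split into $\t^2 \leq m$ versus $\t^2 > m$ is unnecessary: for $\t^2 < m$ your real-part coercivity works, but for $\t^2 = m$ the ``Poincar\'e-on-cells'' route you sketch is essentially a disguised periodic decomposition anyway, so you may as well use Floquet--Bloch throughout.

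There is, however, a genuine gap in the fiber-injectivity step. You correctly observe that $P_\s u = 0$ forces $\sqrt a \, u = 0$, so $u$ vanishes on the nonempty open set $\{a > 0\}$. But your parenthetical trails off precisely where the argument must be closed: that $u$ then satisfies $-\eta^2 \D_\s u = (\t^2 - m) u$ (not $\t^2 u$, by the way) with a positive right-hand side is no contradiction by itself, since $-\D_\s$ on the torus has plenty of positive eigenvalues. The paper closes this with \emph{unique continuation}: $u$ solves a second-order elliptic equation and vanishes on an open set, hence $u \equiv 0$. An equivalent elementary route, available here because the potential has dropped out, is to note that any $L^2_\sharp$-eigenfunction of $-\D_\s$ is a finite linear combination of exponentials $e^{2\pi i n \cdot x}$, hence real-analytic, and therefore cannot vanish on a nonempty open set unless it vanishes identically. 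Either way, this is the missing ingredient in your plan; the ``imaginary-part argument above'' you invoke does \emph{not} work verbatim on the fibers when $\t^2 > m$, because the step ``$\nabla u = 0$, hence $u$ constant'' relied on $m - \t^2 \geq 0$.
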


\begin{proof} Let $\eta \geq 1$ and $\t \in \R$ be fixed. If $\t = 0$ it is clear that the selfadjoint operator $-\eta^2 \D + m$ is bounded below by $m > 0$ and hence invertible. Now we assume that $\t \neq 0$. For $\s \in \R^d$ we set 
\[
\D_\s = e^{-ix\cdot \s} \D e^{ix\cdot \s} = (\divg + i \s\trsp)  (\nabla + i\s).
\]
and
$P_\s = \big(-\eta^2 \D_\s + m - i\t a - \t^2 \big)$ ($\Dom(P_\s)$ is the set of $H^2_\loc$ and $\Z^d$-periodic functions). Then for $u \in \Sc(\R^d)$ we have 
\[
\big(-\eta^2 \D + m - i\t a - \t^2 \big) u = \frac 1 {(2\pi)^d} \int_{\s \in [0,2\pi]^d} e^{ix\cdot \s} P_\s u^\sharp_\s(x) \, d\s.
\]
Let $\s \in \R^d$. The operator $P_\s$ has nonempty resolvent set and compact resolvent, so its spectrum is given by a sequence of eigenvalues. Let $u \in \Dom(P_\s)$ be such that $P_\s u = 0$. Since 
\[
\nr{\sqrt a u}_{L^2_\sharp}^2 = - \frac {\Im \innp{P_\s u}{u}} \t = 0, 
\]
we get that $u$ vanishes in an open subset of $\R^d$ so, by unique continuation, $u = 0$. Then 0 is not an eigenvalue of $P_\s$, so $P_\s$ is invertible with bounded inverse in $L^2_\sharp$. We set $R_\s = P_\s\inv$. Then for $f \in \Sc(\R^d)$ we set 
\[
R f = \frac 1 {(2\pi)^d} \int_{\s \in [0,2\pi]^d} e^{ix\cdot \s} R_\s u^\sharp_\s(x) \, d\s.
\]
Since $R_\s$ is a continuous function of $\s$, it is bounded on $[0,2\pi]^d$. Then, by the Parseval identity, 
\[
\nr{Rf}_{L^2}^2 = \frac 1 {(2\pi)^d} \int_{\s \in [0,2\pi]^d} \nr{R_\s f_\s^\sharp}_{L^2_\sharp}^2 \, d\s \lesssim \frac 1 {(2\pi)^d} \int_{\s \in [0,2\pi]^d} \nr{ f_\s^\sharp}_{L^2_\sharp}^2 \, d\s = \nr{f}_{L^2}^2.
\]
Thus $R$ defines a bounded operator on $L^2(\R^d)$. Then we check that it is an inverse for $\big(-\eta^2 \D + m - i\t a - \t^2 \big)$ and the proposition is proved.
\end{proof}

Now we turn to the proof of the second statement of Proposition \ref{prop-res-eta}. It relies on the following observability estimate:

\begin{proposition} \label{prop-Wun}
Let $\o$ be a nonempty, open and $\Z^d$-invariant subset of $\R^d$. Then there exists $C > 0$ such that for all $u \in H^2(\R^d)$ and $\l \in \R$ we have 
\[
\nr{u}_{L^2} \leq C \left( \nr{(-\D - \l)u}_{L^2} + \nr{u}_{L^2(\o)} \right).
\]
\end{proposition}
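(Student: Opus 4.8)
The plan is to pass to the Floquet--Bloch fibres --- where the torus is compact --- and then argue by contradiction using semiclassical defect measures, in the spirit of \cite{BurqJo}; alternatively one may simply quote the estimate (for a fixed $\l$ it is classical, and the uniform-in-$\l$ version used in \cite{Wunsch} follows from the torus resolvent estimates of Bourgain--Burq--Zworski). I describe the self-contained route. First, with the $\sharp$-transform recalled above, $-\D$ is unitarily equivalent to $\int^{\oplus}_{[0,2\pi]^d}(-\D_\s)\,d\s$ on $\int^{\oplus}_{[0,2\pi]^d}L^2_\sharp\,d\s$, where $\D_\s=e^{-ix\cdot\s}\D e^{ix\cdot\s}$, each $-\D_\s$ has compact resolvent on $L^2_\sharp$, and, $\1_\o$ being $\Z^d$-periodic, multiplication by $\1_\o$ acts fibrewise. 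Squaring the asserted inequality and using the Parseval identity, it suffices to prove the \emph{fibrewise} bound: there is $C>0$ such that for all $\s\in[0,2\pi]^d$, $\l\in\R$ and $v\in H^2_\sharp$,
\[
\nr{v}_{L^2_\sharp}\le C\Big(\nr{(-\D_\s-\l)v}_{L^2_\sharp}+\nr{v}_{L^2(\o\cap[0,1]^d)}\Big),
\]
a uniform-in-$(\s,\l)$ observability estimate on the compact torus $\T^d=\R^d/\Z^d$.

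To prove the fibrewise bound I would argue by contradiction: suppose there are $\s_n$, $\l_n$, $v_n$ with $\nr{v_n}_{L^2_\sharp}=1$, $\nr{(-\D_{\s_n}-\l_n)v_n}\to0$ and $\nr{v_n}_{L^2(\o\cap[0,1]^d)}\to0$. Pairing the quasimode relation with $v_n$ yields $\l_n=\nr{(\nabla+i\s_n)v_n}_{L^2_\sharp}^2+o(1)$, so $(\l_n)$ is bounded below; after extraction, either $\l_n\to\l_\infty\in[0,+\infty)$ (and $\s_n\to\s_\infty$) or $\l_n\to+\infty$. In the bounded case the quasimode relation bounds $(v_n)$ in $H^2_\sharp$, which embeds compactly in $L^2_\sharp=L^2(\T^d)$, so $v_n\to v_\infty$ in $L^2$ along a subsequence, $\nr{v_\infty}=1$, $(-\D_{\s_\infty}-\l_\infty)v_\infty=0$, and $v_\infty\equiv0$ on the nonempty open set $\o\cap[0,1]^d$. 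Being a solution of a constant-coefficient elliptic equation, $v_\infty$ is real-analytic, hence $v_\infty\equiv0$ on the connected torus --- a contradiction.

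It then remains to treat $\l_n\to+\infty$, and this is where the work lies. Setting $h_n=\l_n^{-1/2}\to0$ we get $(-h_n^2\D_{\s_n}-1)v_n=h_n^2(-\D_{\s_n}-\l_n)v_n=o(h_n^2)$ in $L^2_\sharp$. Since $h_n\s_n\to0$, the principal symbol is $\abs{\x}^2$; ellipticity of $\abs{\x}^2-1$ off the unit cosphere localises $v_n$ in frequency near $\{\abs{\x}=1\}$, and, the torus being compact, $(v_n)$ has, along a subsequence, a semiclassical defect measure $\mu$ on $T^*\T^d$ with $\mu(T^*\T^d)=1$. The strong quasimode gain forces $\supp\mu\subset\{\abs{\x}=1\}$ and invariance of $\mu$ under the geodesic flow $(x,\x)\mapsto(x+2t\x,\x)$, while $\nr{v_n}_{L^2(\o\cap[0,1]^d)}\to0$ forces $\mu$ to put no mass above $\o$. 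Disintegrating $\mu$ in $\x$, over any direction $\x$ for which $t\mapsto x+2t\x$ equidistributes in $\T^d$ the flow is uniquely ergodic, so the corresponding fibre of $\mu$ is Lebesgue measure, which charges $\o$; thus $\mu$ would have to be carried by the Lebesgue-null set of rational directions. Excluding this possibility is the main obstacle: I would second-microlocalise along each rational resonance sublattice and propagate the resulting two-microlocal measures along the induced flows on the corresponding subtori, iterating over the chain of resonances, exactly as in the periodic analysis of \cite{BurqJo} --- this being also the mechanism behind the torus semiclassical-measure rigidity of Anantharaman--Maci\`a and of Bourgain--Burq--Zworski. The conclusion is that a nonzero flow-invariant measure on $\{\abs{\x}=1\}$ arising in this way cannot vanish above a nonempty open set, whence $\mu=0$, contradicting $\mu(T^*\T^d)=1$. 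This resonant step is the heart of the proof; everything else is routine.
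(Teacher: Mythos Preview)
The paper does not give its own proof of this proposition; it cites \cite{Wunsch} and remarks that the estimate is deduced there from the torus case by Floquet--Bloch decomposition, pointing to \cite{Jaffard90,BurqZwo12,AnantharamanLeMa16} for the underlying torus observability. Your proposal follows precisely this route. The fibrewise reduction via the $\sharp$-transform is exactly what the paper alludes to, and your contradiction argument for the fibrewise bound is the standard strategy. The bounded-$\l$ case is correct as written (compactness of $H^2_\sharp\hookrightarrow L^2_\sharp$ plus analytic unique continuation on the connected torus).

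For $\l_n\to+\infty$ your outline is also accurate: the $o(h_n^2)$ quasimode bound yields localisation on $\{\abs\x=1\}$ and invariance of the semiclassical measure under the geodesic flow, and the real obstruction is that flow-invariance alone does not rule out concentration on closed geodesics in rational directions. You correctly identify second microlocalisation along the resonant sublattices, iterated through the chain of resonances, as the mechanism that excludes this --- this is exactly the content of \cite{BurqZwo12} (and the Anantharaman--Maci\`a work) that the paper invokes. One small imprecision: the set of ``bad'' directions is not just the rational ones but all $\x\in\Sph^{d-1}$ satisfying at least one integer relation, though this is still a countable union of great subspheres and hence Lebesgue-null, so your conclusion stands. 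In short, your approach coincides with the one the paper defers to and is a correct outline; the only honest caveat, which you yourself flag, is that the resonant second-microlocal step is a substantial argument in its own right and cannot be reduced to a paragraph.
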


This kind of estimate is a difficult result in general. It is only known in very particular settings (see for instance \cite{Jaffard90,BurqZwo12,AnantharamanLeMa16}). Proposition \ref{prop-Wun} is deduced in \cite{Wunsch} from the case of the torus by means of the Floquet-Bloch decomposition as above. With this proposition in hand, we can prove the following resolvent estimate:

\begin{proposition} \label{prop-estim-Reta}
There exists $c_2 > 0$ such that for all $\eta \geq 1$ and $\t \in \R$ we have 
\[
\nr{R_\eta(\t)}_{\Lc(L^2)} \leq \frac {c_2}{\pppg \tau} \left( 1 + \frac {\abs \t}{\eta^2} \right)^2.
\]
\end{proposition}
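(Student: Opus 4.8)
The plan is to prove the equivalent a priori bound: for every $\eta \geq 1$, $\t \in \R$, $f \in L^2(\R^d)$ and $u = R_\eta(\t) f$ one has $\nr{u}_{L^2} \leq \frac{c_2}{\pppg \t}\big(1 + \abs\t/\eta^2\big)^2 \nr{f}_{L^2}$; since $R_\eta(\t)$ exists by Proposition \ref{prop-Reta-existence} and maps $L^2$ into $H^2$ (by elliptic regularity, $a$ being bounded), this is exactly the asserted estimate. The two ingredients are the energy identities obtained by pairing $\big(-\eta^2 \D + m - i\t a - \t^2\big)u = f$ with $u$ and separating real and imaginary parts,
\[
\eta^2 \nr{\nabla u}_{L^2}^2 + (m-\t^2)\nr{u}_{L^2}^2 = \Re \innp{f}{u}, \qquad \t \nr{\sqrt a\, u}_{L^2}^2 = -\Im \innp{f}{u},
\]
together with the observability estimate of Proposition \ref{prop-Wun}, applied to the open $\Z^d$-periodic set $\o$ appearing in \eqref{hyp-minor-a}.

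I would then split into low and high frequencies. If $\t^2 \leq m/2$, the first identity and $\eta^2\nr{\nabla u}^2\geq 0$ give $\frac m2 \nr{u}^2 \leq \Re\innp{f}{u} \leq \nr f \nr u$, hence $\nr u \leq \frac 2m \nr f$; since $\pppg\t \leq \sqrt{1+m/2}$ on this range, this already dominates $\frac{c_2}{\pppg\t}(1+\abs\t/\eta^2)^2\nr f$ for $c_2$ large. If $\t^2 \geq m/2$ (so $\t\neq 0$ and $\pppg\t \leq \sqrt{1+2/m}\,\abs\t$), the second identity gives $\nr{\sqrt a\,u}^2 \leq \nr f\nr u / \abs\t$, and by \eqref{hyp-minor-a}, $\nr{u}_{L^2(\o)} \leq \a_0^{-1/2}\nr{\sqrt a\,u} \leq \a_0^{-1/2}\abs\t^{-1/2}\nr f^{1/2}\nr u^{1/2}$. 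Writing the equation as $(-\D-\l)u = \eta^{-2}(f + i\t a u)$ with $\l = (\t^2-m)/\eta^2 \in \R$ and using Proposition \ref{prop-Wun} together with $\nr{au}^2 \leq \nr{a}_{L^\infty}\nr{\sqrt a\,u}^2$,
\[
\nr u \leq C\Big(\eta^{-2}\nr{f + i\t a u} + \nr{u}_{L^2(\o)}\Big) \leq C\Big(\eta^{-2}\nr f + \eta^{-2}\nr{a}_{L^\infty}^{1/2}\abs\t\,\nr{\sqrt a\,u} + \nr{u}_{L^2(\o)}\Big).
\]
Inserting $\nr{\sqrt a\,u} \leq \abs\t^{-1/2}\nr f^{1/2}\nr u^{1/2}$ and the bound on $\nr{u}_{L^2(\o)}$, then applying Young's inequality to absorb the two terms proportional to $\nr f^{1/2}\nr u^{1/2}$ into $\tfrac12\nr u$ (using $\eta \geq 1$ freely), one obtains $\nr u \lesssim \big(\eta^{-2} + \abs\t\,\eta^{-4} + \abs\t^{-1}\big)\nr f$. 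Since on this range $\frac{1}{\pppg\t}(1+\abs\t/\eta^2)^2 \asymp_m \abs\t^{-1} + \eta^{-2} + \abs\t\,\eta^{-4}$, this is precisely the claimed bound, and combining the two cases finishes the proof.

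The genuinely hard content, namely the observability estimate, is supplied by Proposition \ref{prop-Wun}; within the present argument the only delicate point is the bookkeeping of powers of $\eta$ and $\t$. Concretely, when reducing to a pure Helmholtz operator the forcing term $i\t a u$ is controlled only through $\nr{\sqrt a\, u}$, i.e.\ with a gain of $\abs\t^{-1/2}$ rather than $\abs\t^{-1}$; this is exactly what produces the square (the exponent $2$) in $(1+\abs\t/\eta^2)^2$, and it is also why the low-frequency regime $\t^2 \leq m/2$, where such a loss would be fatal, must be handled separately by the coercivity identity.
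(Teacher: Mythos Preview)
Your proof is correct and follows essentially the same route as the paper: split off a low-frequency regime where coercivity gives the bound directly, and for large $|\t|$ rewrite the equation as $(-\D-\l)u=\eta^{-2}(f+i\t a u)$, feed this into the observability estimate of Proposition~\ref{prop-Wun}, control $\nr{au}$ and $\nr{u}_{L^2(\o)}$ through $\nr{\sqrt a\,u}^2\leq\nr f\nr u/|\t|$, and absorb via Young. The only cosmetic differences are that the paper handles small $\t$ by a Neumann-series perturbation of $R_\eta(0)$ rather than the real-part identity, and it groups the two $\nr f^{1/2}\nr u^{1/2}$ contributions into a single factor $(1+|\t|/\eta^2)\nr{au}$ before applying Young with weights; your separate treatment of the two terms leads to the same final bound $\eta^{-2}+|\t|\eta^{-4}+|\t|^{-1}\asymp\pppg\t^{-1}(1+|\t|/\eta^2)^2$.
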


\begin{proof} 

We have $\nr{R_\eta(0)} \leq 1/m$, so if $\t_0 > 0$ is such that $\t_0 \nr{a}_\infty + \t_0^2 \leq m/2$ then by a standard perturbation argument we have $\nr{R_\eta(\t)} \leq 2/m$ for all $\t \in [-\t_0,\t_0]$. Thus, in the rest of the proof it is enough to estimate $\nr{R_\eta(\t)}$ for $\abs \t \geq \t_0$. So let $u \in H^2(\R^d)$, $\eta \geq 1$ and $\t \in \R$ with $\abs \t \geq \t_0$. We set 
\begin{equation*} 
f = \big(-\eta^2 \D + m - i \t a - \t^2\big) u.
\end{equation*}
This can be rewriten as 
\[
-\D u - \frac {\t^2-m} {\eta^2} u = \frac{f + i\t a u}{\eta^2}.
\]
By Proposition \ref{prop-Wun} applied with $\o$ given by \eqref{hyp-minor-a} we obtain 
\begin{equation} \label{estim-nr-u}
\nr{u} \lesssim \frac 1 {\eta^2} \left(\nr{f} + \abs \t \nr{au} \right) + \nr{u}_{L^2(\o)} \lesssim  \frac {\nr{f}}{\eta^2} + \left(1+\frac {\abs \t}{\eta^2} \right) \nr{a u}.
\end{equation}
Since $a$ is bounded we have $a \lesssim \sqrt a$, so for any $\e > 0$ we have 
\begin{align*} 
\nr{a u}_{L^2}^2
& \lesssim \nr{\sqrt a u}_{L^2}^2 = - \frac {\Im \innp{f}{u}}{\t} \leq \frac {\nr f_{L^2} \nr u_{L^2}} {\abs \t}\\
& \leq \e^2 \left(1 + \frac {\abs \t}{\eta^2} \right)^{-2} \nr{u}^2 + \left(1 + \frac {\abs \t}{\eta^2} \right)^2 \frac {\nr{f}^2}{4\e^2 \tau^2}.
\end{align*}
Then \eqref{estim-nr-u} gives
\[
\nr{u} \lesssim \e \nr{u} + \frac {C_\e}{\abs \tau} \left(1 + \frac {\abs \t}{\eta^2} \right)^2 \nr{f}.
\]
With $\e > 0$ chosen small enough we get 
\[
\nr{u} \lesssim \frac {1}{\abs \tau} \left(1 + \frac {\abs \t}{\eta^2} \right)^2\nr{f},
\]
which gives the required estimate for $R_\eta(\t)$.
\end{proof}

We finally prove the last statement of Proposition \ref{prop-res-eta}:

\begin{proposition} \label{prop-estim-Reta-GCC}
If the damping condition \eqref{hyp-GCC} holds, then there exists $C_2 > 0$ such that for all $\eta \geq 1$ and $\t \in \R$ we have 
\[
\nr{R_\eta(\t)}_{\Lc(L^2)} \leq \frac {C_2} {\pppg \t}.
\]
\end{proposition}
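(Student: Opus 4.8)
The plan has three stages: first, Proposition~\ref{prop-estim-Reta} already handles all frequencies up to size $\eta^2$; second, in the remaining high-frequency regime we introduce a semiclassical parameter $h$ tuned so that the characteristic set of the rescaled operator sits on the unit sphere; third, we invoke a standard, $\eta$-free semiclassical observability estimate for $-h^2\D-1$ under the geometric control condition, and check that all powers of $\eta$ cancel. Since $\overline{R_\eta(\t)f}=R_\eta(-\t)\overline f$ we may assume $\t\geq0$. As recalled after Theorem~\ref{th-no-GCC}, under our hypotheses we may fix $\tilde a\in C^\infty(\R^d)$ with bounded derivatives of all orders, $0\leq\tilde a\leq a$, satisfying \eqref{hyp-GCC}; write $\tilde b(x,\x)=\frac1T\int_0^T\tilde a(x+2t\x)\,dt$, so that $\tilde b\geq\a$ on $\R^d\times\Sph^{d-1}$ and, by continuity, $\tilde b\geq\a/2$ near $\{\abs\x=1\}$. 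For any fixed $R_0\geq1$, Proposition~\ref{prop-estim-Reta} gives the conclusion when $\t\leq\max(R_0,\eta^2)$: there $\t/\eta^2\leq\max(R_0,1)=R_0$ (as $\eta\geq1$), so $(1+\t/\eta^2)^2\leq(1+R_0)^2$ and $\nr{R_\eta(\t)}_{\Lc(L^2)}\leq c_2(1+R_0)^2\pppg\t^{-1}$. It therefore remains to treat $\t>\max(R_0,\eta^2)$. There we set
\[
h:=\frac{\eta}{\sqrt{\t^2-m}},\qquad\nu:=\frac{\t}{\t^2-m},
\]
so that dividing $f=(-\eta^2\D+m-i\t a-\t^2)u$ by $\t^2-m$ yields $(-h^2\D-1)u=g_0+i\nu a\,u$ with $g_0:=f/(\t^2-m)$. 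If $R_0\geq\sqrt{2m}$ then $\t^2-m>\t^2/2$, hence $\nu<2/\t$, $1/h=\sqrt{\t^2-m}/\eta\leq\t/\eta$, and $h^2=\eta^2/(\t^2-m)<\t/(\t^2-m)<2/\t<2/R_0$ using $\t>\eta^2$; choosing $R_0:=\max(\sqrt{2m},2/h_0^2)$ (with $h_0$ as below) forces $0<h<h_0$, and $\nu/h<(2/\t)(\t/\eta)\leq2$.

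The analytic core is the following $\eta$-free observability estimate, obtained as in \cite{BurqJo} in the free setting: there exist $h_0>0$ and $C>0$ depending only on $d$ and $\tilde a$ such that, for $h\in(0,h_0]$ and $v\in H^2(\R^d)$,
\[
\nr v_{L^2}^2\leq C\Big(\innp{\tilde a\,v}{v}+\tfrac1h\,\nr v\,\nr{(-h^2\D-1)v}+\nr{(-h^2\D-1)v}^2\Big).
\]
I would prove it by a positive-commutator argument: with $p(x,\x)=\abs\x^2-1$, pick $\chi\in C_c^\infty(\R^d)$ equal to $1$ near $\{\abs\x=1\}$ and supported where $\tilde b\geq\a/2$, and $g(x,\x)=\frac1T\int_0^T(T-t)\tilde a(x+2t\x)\,dt$; then $\{p,g\}=\tilde b-\tilde a$, and $\{p,g\chi\}=(\tilde b-\tilde a)\chi$ since $\chi$ depends only on $\x$. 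Pairing $\frac ih[-h^2\D-1,\Opw(g\chi)]v$ with $v$ bounds its modulus by $Ch^{-1}\nr v\,\nr{(-h^2\D-1)v}$, while this commutator equals $\Opw((\tilde b-\tilde a)\chi)$ up to $O(h)$; the sharp G\r{a}rding inequality applied to the nonnegative symbols $\tilde b\chi-\tfrac\a2\chi^2$ and $\tilde a(1-\chi)$ then gives $\innp{\Opw((\tilde b-\tilde a)\chi)v}{v}\geq\tfrac\a2\nr{\Opw(\chi)v}^2-\innp{\tilde a\,v}{v}-Ch\nr v^2$, and an elliptic estimate for $\Opw(1-\chi)$ bounds $\nr v$ by $\nr{\Opw(\chi)v}+C\nr{(-h^2\D-1)v}+Ch\nr v$. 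Absorbing the $O(h)$ terms for $h$ small yields the estimate; only $\tilde a$ and $h$ enter, so there is no dependence on $\eta$.

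To conclude, note that $\Im\innp{f}{u}=-\t\nr{\sqrt a\,u}^2$, so $\nr{\sqrt a\,u}^2\leq\nr f\,\nr u/\t$, and since $0\leq\tilde a\leq a\in L^\infty$ we have $\innp{\tilde a\,u}{u}\leq\nr{\sqrt a\,u}^2$ and $\nr{au}\lesssim\nr{\sqrt a\,u}$. Using $\nu<2/\t$, $1/h\leq\t/\eta$ and $\eta\geq1$, the factor of $\t$ coming from $h^{-1}$ is compensated and one gets $\nr{(-h^2\D-1)u}\lesssim\nr f/\t^2+\nr{\sqrt a\,u}/\t$ and $\tfrac1h\nr{(-h^2\D-1)u}\lesssim\nr f/\t+\nr{\sqrt a\,u}$. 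Inserting these and $\innp{\tilde a\,u}{u}\leq\nr{\sqrt a\,u}^2\leq\nr f\,\nr u/\t$ into the observability estimate with $v=u$ gives
\[
\nr u^2\lesssim\frac{\nr f\,\nr u}{\t}+\frac{\nr f^{1/2}\nr u^{3/2}}{\t^{1/2}}+\frac{\nr f^2}{\t^2},
\]
and Young's inequality (the $\nr u$-dependent terms have subcritical homogeneity) absorbs the first two terms into the left-hand side, yielding $\nr u\lesssim\nr f/\t\lesssim\nr f/\pppg\t$. Hence $\nr{R_\eta(\t)}_{\Lc(L^2)}\leq C_2\pppg\t^{-1}$ in this regime; together with the range $\t\leq\max(R_0,\eta^2)$ and the symmetry $\t\mapsto-\t$, this proves the proposition.

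The hard part is not really analytic — the semiclassical observability estimate above is classical in the free (and periodic) setting and is manifestly $\eta$-independent — but the bookkeeping. The crux is to notice that Proposition~\ref{prop-estim-Reta} already controls the full range $\t\lesssim\eta^2$, and to choose $h=\eta/\sqrt{\t^2-m}$ so that $h$ becomes small precisely on the complementary set, while the unusually weak damping $\nu a$ (of size $\sim h/\eta$, not $\sim h$) combines with the correspondingly large factor $1/h\sim\t/\eta$ so that every power of $\eta$ cancels in the final estimate. One could equivalently package the semiclassical input as a resolvent bound of order $(\a_b h)^{-1}$ with explicit dependence on the control constant $\a_b$ of the damping, applied with $\a_b\gtrsim\a/\eta$.
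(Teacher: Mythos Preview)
Your argument is correct and follows the same overall strategy as the paper: use Proposition~\ref{prop-estim-Reta} for $|\t|\lesssim\eta^2$, and a semiclassical estimate with $h\sim\eta/\t$ for larger frequencies. The differences are organisational rather than substantive.

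The paper packages the high-frequency step as a self-contained resolvent bound (Lemma~\ref{lem-high-freq}): there exist $h_0,C_3>0$ such that $\nr{(-h^2\D-i\e h a-1)^{-1}}\leq C_3/(\e h)$ for $h\leq h_0$ and $\e\in]0,1]$, applied with $h=\eta/\t$ and $\e=1/\eta$. This is proved by the contradiction method of Lebeau, tracking the extra small parameter $\e$ through the argument. You instead prove a damping-free observability estimate $\nr v^2\lesssim\innp{\tilde a v}{v}+h^{-1}\nr v\,\nr{(-h^2\D-1)v}+\nr{(-h^2\D-1)v}^2$ by a direct positive-commutator computation (same escape function $g$, same G\aa rding and elliptic ingredients), and only afterwards insert the equation with its small damping $\nu a u$ on the right-hand side. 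Your route is a little more constructive and makes the $\eta$-independence of the analytic core transparent, while the paper's formulation isolates exactly the quantitative dependence on the damping strength needed; as you note in your final sentence, the two packagings are interchangeable. Your handling of the mass $m$ via $h=\eta/\sqrt{\t^2-m}$ is also slightly cleaner than the paper's informal ``$h=\eta/\t$''.
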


We notice that as long as $\abs \t$ remains comparable to $\eta^2$ this is a consequence of Proposition \ref{prop-estim-Reta}. Thus, Proposition \ref{prop-estim-Reta-GCC} is only a result about frequencies greater that $\eta^2$.

One of the standard method to prove such a resolvent estimate under a suitable geometric condition about classical trajectories is to use semiclassical analysis (see for instance \cite{zworski} for an introduction to the subject) and, more precisely, the contradiction method of \cite{Lebeau96}. For this, we rewrite the problem in a semiclassical setting. More precisely, Proposition \ref{prop-estim-Reta-GCC} is a consequence of Proposition \ref{prop-estim-Reta} and of the following lemma, applied with $h = \eta / \t$ and $\e = 1/\eta$:

\begin{lemma} \label{lem-high-freq}
There exist $h_0 > 0$ and $C_3 > 0$ such that for $h \in ]0,h_0]$ and $\e \in ]0,1]$ we have 
\[
\nr{\big(-h^2 \D - i \e h a - 1 \big) \inv}_{\Lc(L^2)} \leq \frac {C_3} {\e h}.
\]
\end{lemma}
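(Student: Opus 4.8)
The plan is to prove Lemma \ref{lem-high-freq} by the classical contradiction/compactness argument of Lebeau, keeping careful track of the auxiliary parameter $\e$. Suppose the estimate fails. Then there exist sequences $h_n \to 0$, $\e_n \in ]0,1]$ and $u_n \in L^2(\R^d)$ with $\nr{u_n}_{L^2} = 1$ such that
\[
f_n := \big(-h_n^2 \D - i \e_n h_n a - 1\big) u_n \quad \text{satisfies} \quad \nr{f_n}_{L^2} = o(\e_n h_n).
\]
Pairing against $u_n$ and taking imaginary parts gives $\e_n h_n \nr{\sqrt a \, u_n}_{L^2}^2 = -\Im\innp{f_n}{u_n} = o(\e_n h_n)$, hence $\nr{\sqrt a \, u_n}_{L^2} \to 0$, and taking real parts gives $h_n^2\nr{\nabla u_n}_{L^2}^2 = \nr{u_n}_{L^2}^2 + o(1)$, so $(u_n)$ is bounded in $h_n$-scaled Sobolev norm and is $h_n$-oscillating. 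After extracting a subsequence, the $u_n$ admit a semiclassical defect measure $\mu$ on $T^*\R^d$; from $\nr{u_n}=1$ and the $h_n$-oscillation we get $\mu(T^*\R^d) > 0$ (this uses that no mass escapes to infinity, which I address below), and from $\big(\abs\x^2 - 1\big)u_n = o(1)$ in the appropriate sense (since $\e_n h_n a u_n \to 0$ in $L^2$ because $a$ is bounded and $\e_n h_n \to 0$ — wait, one only gets $\sqrt a u_n \to 0$, but $a u_n = \sqrt a \cdot \sqrt a u_n \to 0$ in $L^2$ as well since $\sqrt a$ is bounded) $\mu$ is supported on the energy shell $\set{\abs\x = 1}$.

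The next step is propagation: the right-hand side $f_n + i\e_n h_n a u_n$ is $o(h_n)$ in $L^2$, so the standard commutator argument shows $\mu$ is invariant under the geodesic (here, straight-line) flow $\Phi_t(x,\x) = (x + 2t\x, \x)$. Here is the point where $\e_n$ enters delicately and where I expect the main obstacle: one must exploit the damping to conclude $\mu = 0$. The naive bound $\nr{\sqrt a u_n}^2 \to 0$ tells us $\int a \, d\mu = 0$, i.e. $\mu$ vanishes on $\set{a > 0}$; combined with flow-invariance and the geometric control condition \eqref{hyp-GCC} on $a$ (or on $\tilde a \le a$) this forces $\mu \equiv 0$ — but only because GCC is assumed uniformly in the relevant scaling. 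The subtlety is that in Lemma \ref{lem-high-freq} there is no $\eta$ left and the damping $a$ is fixed and periodic; since a fixed periodic nonnegative $a \not\equiv 0$ that is, say, smooth and satisfies \eqref{hyp-GCC} (as we assume for Theorem \ref{th-unif-decay}), the support of $a$ on $\R^d$ does satisfy the geometric control condition in the ordinary sense, and the classical Lebeau argument applies verbatim. The role of $\e_n$ is merely that $o(\e_n h_n)$ on the right is a stronger hypothesis than $o(h_n)$, so it is harmless: one simply discards the extra factor $\e_n$ and runs the argument with the weaker statement $\nr{f_n}_{L^2} = o(h_n)$ plus $\nr{\sqrt a u_n}_{L^2} = o(1)$, which is exactly what the defect-measure argument consumes.

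To make $\mu(T^*\R^d) > 0$ rigorous I would handle the lack of compactness of $\R^d$ by a localization argument: because $a$ is $\Z^d$-periodic and satisfies GCC, every classical ray spends a definite fraction of each time-interval of length $T$ in $\set{a \ge \a}$; translating by periods if necessary and using that $\mu$ is a probability measure up to the mass lost at infinity, one still gets $\int_0^T \int a \circ \Phi_t \, d\mu\, dt \ge \a \,\mu(T^*\R^d)$ by Fubini and flow-invariance, and the left side is $0$, forcing $\mu(T^*\R^d) = 0$; this contradicts the $h_n$-oscillation together with $\nr{u_n} = 1$, since for an $h_n$-oscillating normalized sequence one has $\liminf \mu(T^*\R^d) \ge 1$ minus the mass at spatial infinity, and that escaping mass is itself controlled by the periodicity of $a$ (no mass can concentrate at infinity without violating the uniform damping along rays — alternatively one invokes that $-h^2\D$ has the same symbol everywhere so there is no obstruction near infinity and a direct cutoff estimate rules out escape). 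Thus we reach a contradiction, proving the lemma. Finally, combining Lemma \ref{lem-high-freq} with $h = \eta/\abs\t$ and $\e = 1/\eta$ yields $\nr{R_\eta(\t)}_{\Lc(L^2)} \lesssim 1/\abs\t$ for $\abs\t$ large compared to $\eta^2$, and for $\abs\t \lesssim \eta^2$ Proposition \ref{prop-estim-Reta} already gives the bound $\lesssim 1/\pppg\t$; patching the two regimes proves Proposition \ref{prop-estim-Reta-GCC}.
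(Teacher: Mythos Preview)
Your plan follows the same Lebeau-type contradiction strategy as the paper, and the opening steps (extracting $\nr{\sqrt a\, u_n} \to 0$ from the imaginary part, localization on the energy shell, propagation along straight lines) are correct. The difference is that you work through a semiclassical defect measure $\mu$, whereas the paper stays with quadratic forms $\innp{\Opwn(q) u_n}{u_n}$ for \emph{bounded} symbols $q \in C_b^\infty(\R^{2d})$ and closes via the G\aa rding inequality. That difference is exactly where your gap sits: on the non-compact space $\R^d$ the measure $\mu$ may miss mass escaping to spatial infinity, and your justification that $\mu(T^*\R^d) > 0$ is not rigorous. The claim that ``no mass can concentrate at infinity without violating the uniform damping along rays'' is circular --- damping along rays is encoded in $\mu$ only for the mass that $\mu$ actually captures --- and ``translating by periods'' is only a hint, since $u_n$ itself is not periodic and you do not carry out the construction.

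The paper sidesteps this entirely. After replacing $a$ by a smooth $0 \le a_\infty \le a$ still satisfying GCC, it sets $q(x,\x) = \h(\x^2) e^{b(x,\x)}$ with $b$ a time-average of $a_\infty$ along the flow, so that $\{\x^2, q\} + 2 a_\infty q + \tfrac{\a}{2}(1-\h)(\x^2) \ge \tfrac{\a}{2}$ holds \emph{pointwise on all of} $\R^{2d}$. G\aa rding's inequality then yields $\innp{\Opwn(\{\x^2,q\}) u_n}{u_n} \ge \a/4 - o(1)$, where the lower bound comes from the constant $\a/2$ tested against the full $L^2$-mass $\nr{u_n}^2 = 1$; no spatial localization is required. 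On the other hand a direct commutator computation shows the same quantity tends to $0$, giving the contradiction. If you want to keep the defect-measure formulation, you must genuinely prevent escape to infinity --- for instance, translate each $u_n$ by a suitable $k_n \in \Z^d$ (which preserves the equation by periodicity of $a$) so that a fixed fraction of the mass remains in a bounded set, and only then extract $\mu$; otherwise the paper's direct route is the cleaner fix.
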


The difference with the usual high frequency estimates for the damped wave equation is that we make more explicit the dependence with respect to the strength of the damping. For the proof we essentially follow \cite{BurqJo} and check the dependence in $\e$. Notice that up to now we have only used assumption \eqref{hyp-minor-a}. It is only for the proof of Lemma \ref{lem-high-freq} that we need to replace $a$ by a smooth absorption index.

\begin{proof}
\stepp 
We construct an absorption index $a_\infty \in C^\infty(\R^d)$ such that $0 \leq a_\infty \leq a$ and $\pppg{a_\infty}_T \geq \a / 2$ on $\R^d \times \Sph^{d-1}$, where $T > 0$ and $\a > 0$ are given by \eqref{hyp-GCC}. 
For this, we set $\tilde a = {\max(0,a - \a / 4)}$. Since $a$ is continuous and periodic, so is $\tilde a$, and there exists $\d > 0$ such that if $\tilde a(x) > 0$ then $a$ is positive on the ball $B(x,2\d)$. 
Moreover, there exists $a_0 > 0$ such that $a \geq a_0$ on a $\d$-neighborhood of the support of $\tilde a$. On the other hand, since $\tilde a$ is continuous and periodic it is uniformly continuous, so we can choose $\d$ smaller to ensure that $\abs{\tilde a(x_1) - \tilde a(x_2)} \leq \min(a_0,\a/4)$ whenever $\abs{x_1-x_2} \leq \d$. Let $\rho \in C^\infty(\R^d,\R_+)$ be supported in the ball $B(0,\d)$ and of integral 1. We set $a_\infty = \tilde a * \rho$. 
Then $a_\infty$ is smooth and takes non-negative values. It is supported in the $\d$-neighborhood of $\supp(\tilde a)$ and $\nr{\tilde \a - a_\infty}_\infty \leq \min(a_0, \a/4)$ so $a_\infty \leq a$. 
Moreover $\nr{a-a_\infty}_\infty \leq \a/2$ so $\pppg {a_\infty}_T \geq \a / 2$ on $\R^d \times \Sph^{d-1}$. Then, by continuity and periodicity of $a_\infty$, there exists $\e > 0$ such that for $(x,\x) \in \R^{2d}$ with $1-\e \leq \abs \x^2 \leq 1 + \e$ we have
\begin{equation} \label{GCC-inf}
\pppg {a_\infty}_T (x,\x) \geq \frac {\a} 4.
\end{equation}

\stepp Assume by contradiction that the statement of the lemma is wrong. Then we can find sequences $(u_n)_{n \in \N} \in (H^2(\R^d))^\N$, $(h_n)_{n \in \N} \in ]0,1]^\N$ and $(\e_n)_{n \in \N} \in ]0,1]^\N$ such that $h_n \to 0$, $\nr{u_n}_{L^2} = 1$ for all $n \in \N$ and 
\begin{equation} \label{estim-Pn-un}
\nr{(-h_n^2 \D - i \e_n h_n a - 1)u_n}_{L^2} = \littleo n \infty (\e_n h_n).
\end{equation}

\stepp We have 
\begin{equation*}
\innp{a u_n}{u_n} = - \frac 1 {\e_n h_n} \Im \innp{ (-h_n^2 \D - i \e_n h_n a - 1) u_n}{u_n}_{L^2} \limt n \infty 0,
\end{equation*}
and in particular 
\begin{equation} \label{eq-nr-a-un}
\nr{a_\infty u_n}_{L^2} \leq \nr{a u_n}_{L^2} \lesssim \nr{\sqrt a u_n}_{L^2} \limt n \infty 0.
\end{equation}

\stepp For $n \in \N$ we set $P_n = (-h_n^2 \D  -i\e_n h_n a - 1)$. Let $q \in C_b^\infty(\R^{2d},\R)$ (the set of smooth and real valued functions with bounded derivatives). We consider the Weyl quantization of $q$ 
\[
\Opw(q) u (x) = \frac 1 {(2\pi h)^{\frac d 2}} \int_{\R^d} \int_{\R^d} e^{\frac i h \innp{x-y} \x} q\left( \frac {x+y}2,\x \right) u(y) \, dy \, d\x.
\]
We have
\begin{align*}
\innp{\Opwn(\{\x^2,q\}) u_n}{u_n}
& = \frac 1 {h_n} \innp{[-h_n^2 \D ,\Opwn(q)] u_n}{u_n} + O(h_n)\\
& = \frac 1 {h_n} \big(\innp{\Opwn(q) u_n}{P_n u_n} - \innp{P_n u_n}{\Opwn(q) u_n} \big)\\
&\quad  - i \e_n \big(\innp{\Opwn(q)u_n}{au_n} + \innp{a u_n} {\Opwn(q) u_n}\big) + O(h_n),
\end{align*}
so by \eqref{estim-Pn-un} and \eqref{eq-nr-a-un}
\begin{equation} \label{eq-poisson-bracket-1}
\innp{\Opwn(\{\x^2,q\}) u_n}{u_n} \limt n \infty 0.
\end{equation}

\stepp Let $\k \in C_b^\infty(\R^{2d},\R)$ be equal to 0 on $\set{(x,\x) \in \R^{2d} \st \abs{\abs \x^2 - 1} \leq \e_0}$ for some $\e_0 > 0$. Let
\[
\tilde \k : (x,\x) \mapsto \frac {\k(x,\x)}{\abs \x^2 - 1}.
\]
This also defines a symbol in $C_b^\infty(\R^{2d},\R)$ and we have 
\begin{equation} \label{eq-kappa}
\begin{aligned}
\innp{\Opwn(\k) u_n}{u_n}
& = \innp{\Opwn(\tilde \k) (-h_n^2\D - 1) u_n}{u_n} + O(h_n)\\
& = \innp{\Opwn(\tilde \k) P_n u_n}{u_n} + O(h_n)\\
& \limt n \infty 0.
\end{aligned}
\end{equation}

\stepp For $(x,\x) \in \R^{2d}$ we set 
\[
b(x,\x): = \frac 2 T \int_0^T \int_0^t a_\infty (x+2s\x) \, ds \, dt = \frac 2 T \int_0^T (T-s) a_\infty(x+2s\x) \, ds.
\]
We have 
\[
b(x+2\th\x,\x) = \frac 2 T \int_\th^{T+\th} (T-s+\th) a_\infty(x+2s\x) \, ds,
\]
so 
\[
\{\x^2,b\}(x,\x) = \restr{\frac d {d\th} b(x+2\th \x,\x)}{\th = 0} = - 2 a_\infty(x) + 2 \pppg {a_\infty}_T (x,\x).
\]
Let $\h \in C_0^\infty(\R)$ be supported in $]1-\e,1+\e[$ and equal to 1 on a neighborhood of 1 ($\e$ was defined before \eqref{GCC-inf}). For $(x,\x) \in \R^{2d}$ we set 
\[
q(x,\x) := \h(\x^2) e^{b(x,\x)} \geq \h(\x^2).
\]
Then 
\[
\{\x^2,q\} = q \{\x^2,b\}  = 2q\pppg {a_\infty}_T - 2a_\infty q.
\]
By \eqref{GCC-inf} we have
\[
\{\x^2,q\} + 2a_\infty q + \frac \a 2 (1-\h)(\x^2) \geq \frac \a 2.
\]
By the G\aa rding inequality we obtain for $n$ large enough
\begin{align*}
\innp{\Opwn(\{\x^2,q\})u_n}{u_n} 
\geq \frac {\a} 4 - \innp{\Opwn \left(2 a_\infty q + \frac \a 2 (1-\h)(\x^2) \right) u_n}{u_n} 
\end{align*}
With \eqref{eq-nr-a-un} and \eqref{eq-kappa} we get
\[
\liminf_{n \to \infty} \innp{\Opwn(\{\x^2,q\})u_n}{u_n} \geq \frac \a 4.
\]
This gives a contradiction with \eqref{eq-poisson-bracket-1} and concludes the proof.
\end{proof}

\end{document}